\tikzset{%
  >={Latex[width=2mm,length=2mm]},
            base/.style = {rectangle, rounded corners, draw=black, minimum width=4cm, minimum height=1cm, text centered, font=\sffamily},
       startstop/.style = {ellipse, draw=black, minimum width=2cm, fill=gray!30, font=\sffamily},
       decision/.style = {diamond, rounded corners, minimum width=1cm, text centered, draw=black, font=\sffamily}
}
\newtheorem{theorem}{Theorem}[section]
\newtheorem{lemma}[theorem]{Lemma}
\newtheorem{assumption}{Assumption}
\DeclareMathOperator*{\argmin}{arg\,min}
\title{Wasserstein Distributional Learning}
\author{%
  Chengliang Tang \\
  Columbia University\\
  New York, NY 10027 \\
  \texttt{ct2747@columbia.edu} \\
   \And
   Nathan Lenssen \\
   Columbia University \\
   New York, NY 10027 \\
   \texttt{njl2134@columbia.edu} \\
   \And
   Ying Wei \\
   Columbia University \\
   New York, NY 10027 \\
   \texttt{yw2148@cumc.columbia.edu} \\
   \And
   Tian Zheng \\
   Department of Statistics\\
   Columbia University\\
   New York, NY 10027 \\
   \texttt{tz33@columbia.edu} \\
}
\begin{document}
\maketitle

\begin{abstract}
Learning conditional densities and identifying factors that influence the entire distribution are vital tasks in data-driven applications. Conventional approaches work mostly with summary statistics, and are hence inadequate for a comprehensive investigation. Recently, there have been developments on functional regression methods to model density curves as functional outcomes. A major challenge for developing such models lies in the inherent constraint of non-negativity and unit integral for the functional space of density outcomes. To overcome this fundamental issue, we propose Wasserstein Distributional Learning (WDL), a flexible density-on-scalar regression modeling framework that starts with the Wasserstein distance $W_2$ as a proper metric for the space of density outcomes. We then introduce a heterogeneous and flexible class of Semi-parametric Conditional Gaussian Mixture Models (SCGMM) as the model class $\mathfrak{F} \otimes \mathcal{T}$. The resulting metric space $(\mathfrak{F} \otimes \mathcal{T}, W_2)$ satisfies the required constraints and offers a dense and closed functional subspace. For fitting the proposed model, we further develop an efficient algorithm based on Majorization-Minimization optimization with boosted trees. Compared with methods in the previous literature, WDL better characterizes and uncovers the nonlinear dependence of the conditional densities, and their derived summary statistics. We demonstrate the effectiveness of the WDL framework through simulations and real-world applications.
\end{abstract}

{\it Keywords:}  Density-on-scalar Regression; Functional Regression; Wasserstein Distance; Gaussian Mixture Models.
\vfill

\section{Introduction}
\label{seq:intro}

In many scientific fields, such as economics, biology, and climate science, examining the underlying drivers of distributional heterogeneity is a powerful way for knowledge discoveries. For example, studies suggest that climate change has profoundly impacted multiple aspects of a climate outcome's distribution, including its mean, overall variability, and the frequency of extreme values \cite{field2012managing, reich2012spatiotemporal}. Figure~\ref{fig:temp-dist} displays annual distributions of daily temperature anomalies from 1880 to 2012. These distributions exhibit a shift in the mean and a substantial increase in tail behavior heterogeneity. Factor associations and their effects continue to be active areas of research \cite{fahey2017physical, lewis2017evolution}. 
Modeling of conditional distributions/densities is one way to answer such questions. Compared to traditional models that focus on conditional mean or other summary statistics, modeling conditional distributions provides a more comprehensive and unified approach to learning the complex association of interest between an outcome variable and a set of covariates.

Existing approaches modeling conditional densities are mostly likelihood-based and thus require raw individual observations. In practice, access to large-scale raw data may be limited for various reasons, including privacy protection or data storage limitations. For instance, many studies involving human subjects have strict data access due to privacy concerns, publishing only summary statistics, such as histograms, densities and quantiles, for public use. 
Income is another type of sensitive data. Usually, only their quantiles for groups under study are available. In addition,  modeling the density/quantiles is cost-effective to reduce the storage and computation burden when we handle massive data, such as internet traffic or global climate variations. In this paper, we consider (empirical) densities as \textit{functional} outcomes, and develop a Wasserstein distributional learning framework to study their associations with a large number of covariates. 

Functional regressions \cite{wang2016functional} have been applied to model probability density functions (PDFs). \cite{kneip2001inference} apply the functional data analysis (FDA) to density functions and link covariates with the principal components. Because the algorithm ignores the necessary constraints of density functions (i.e., non-negative, Borel measurable, and integrate to one), it can lead to problematic results \cite{delicado2011dimensionality}. To satisfy these inherent constraints, some of recent literature \cite{arata2017functional, boogaart2010bayes, egozcue2006hilbert, talska2018compositional, van2014bayes} adopt centered log-ratio (CLR) transformations to map PDFs onto zero-integral elements of the space of square-integrable real functions. 
A few others consider quantile-based log transformations to render functional regressions of density functions \cite{han2019additive, petersen2016functional}. 
While addressing the constraints of the functional space for density functions, such transformations lead to difficulties with model assessment and interpretation. Furthermore, when the dimension of the covariates grows, these transformation-based approaches become over-restrictive and lead to substantial bias. We illustrate the above challenges using numerical experiments in Section 3.

In this paper, we introduce Wasserstein distributional learning (WDL), a new function-on-scalar regression framework that is (1) defined on a functional space of densities, and (2) uses a Wasserstein loss function to guide the learning of conditional distributions. We propose Semi-parametric Conditional Gaussian Mixture Models (SCGMM), a sufficiently large and flexible semi-parametric conditional distribution family to account for the heterogeneity in the functional output space, and measure the functional discrepancy using the Wasserstein distance.

The Wasserstein distance \cite{villani2008optimal} measures the aggregated discrepancies between two distributions. It offers excellent convergence properties in the distributional function space \cite{villani2008optimal} and has gained popularity for its intuitive interpretation as the optimal transport costs \cite{panaretos2019statistical, villani2003topics}, in addition to its utility in real-world applications \cite{arjovsky2017wasserstein, sgouropoulos2015matching}. Compared with the other two commonly used distributional loss functions, the Kullback–Leibler divergence (KL divergence) and $L^2$ distance, the Wasserstein distance does not require the distributions to have a common support or rely on a specific transformation. It is, hence, more suitable for modeling highly heterogeneous distributions, and enjoys a more straightforward interpretation. Applying the Wasserstein distance to model conditional densities, \cite{petersen2021wasserstein, petersen2019frechet} recently introduced Fr{\'e}chet regression. One challenge with the Fr{\'e}chet regression is that it incorporates extra constraints to ensure non-crossing quantile functions, and thus becomes computationally challenging when the number of covariates is large. 



\begin{figure}[t]
	\begin{center}
		\includegraphics[width=0.6\columnwidth]{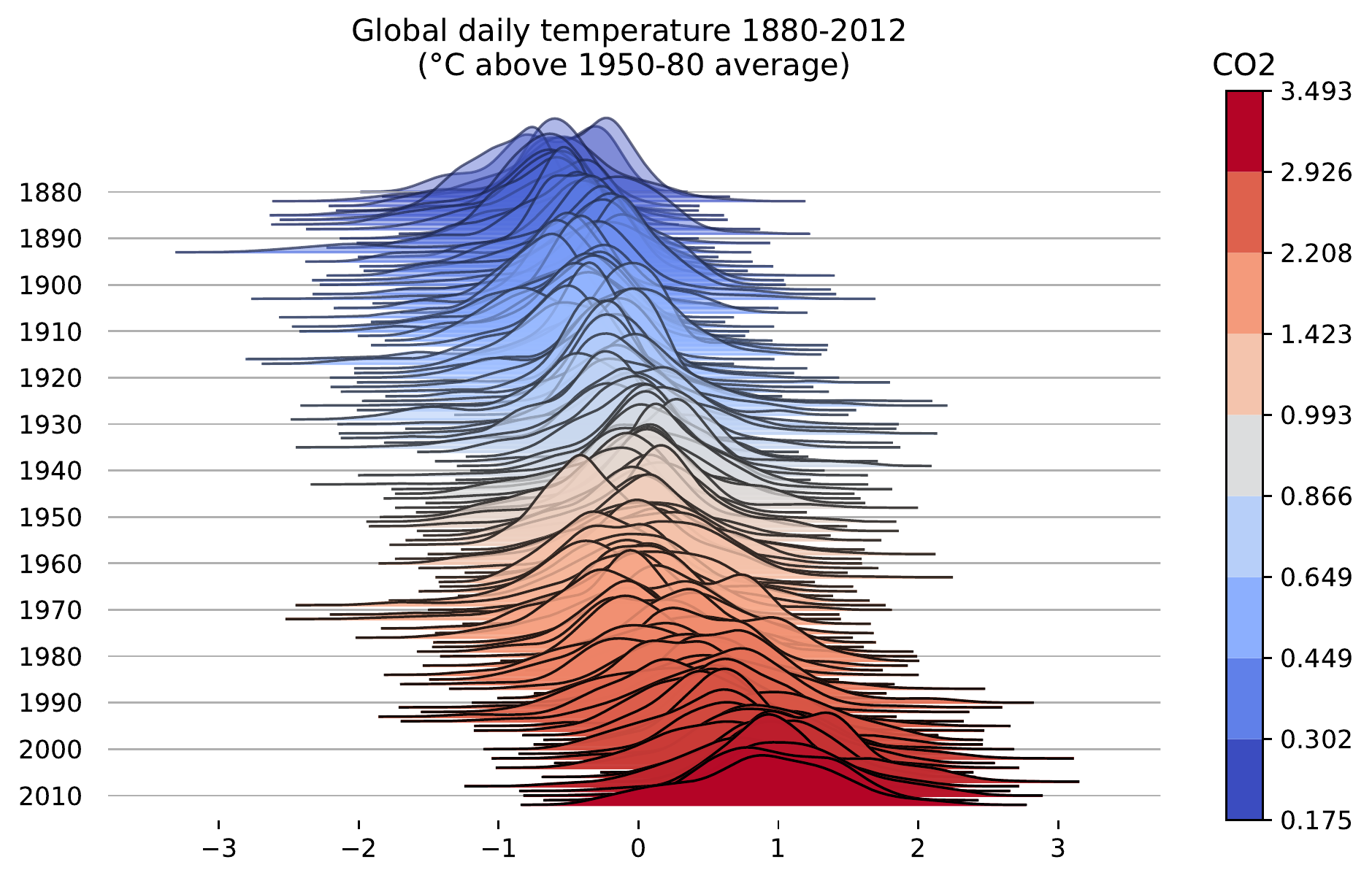}
		\caption{Historical annual distributions of daily land-surface average temperature. Temperatures are in Celsius and reported as anomalies relative to the Jan.\ 1951 -- Dec.\ 1980 average. The color of each density curve represents the corresponding annual radiative effect of increased atmospheric CO2.}
		\label{fig:temp-dist}
	\end{center}
\end{figure}

A natural choice to overcome the above challenge is to consider a family of Semi-parametric Conditional Gaussian Mixture Models (SCGMM). In this paper, we explore SCGMM with the Wasserstein loss both from a theoretical and a computational perspective. We show that under the Wasserstein distance loss, the SCGMM family is dense in the sense that it can approximate any regular functional dependence at any given precision. The proposed combination of SCGMM and Wasserstein loss is therefore a well motivated framework to model conditional distributions in a high-dimensional heterogeneous covariate space. The regression modeling of mixture components further provides interpretable results that reveal complex dependence structures and lead to insightful discoveries. Optimizing the Wasserstein loss over SCGMM model is computationally challenging due to the lack of a closed-form quantile function and the evaluation of its integral for the Wasserstein loss  \cite{kolouri2018sliced}.  This computational difficulty leaves modeling using SCGMM with the Wasserstein loss largely unexplored. In our framework, we derive an efficient iterative optimization algorithm for model training. We leverage the Majorization-Minimization (MM) algorithm \cite{lange2016mm} that converges faster than traditional gradient-based algorithms, and apply a boosting algorithm to efficiently construct nonparametric SCGMM parameter functions over covariates. 


The main contribution of our paper is a novel and efficient function-on-scalar regression framework for modeling distributional outputs. By definition, our framework satisfies the inherent constraints of density functions, and is capable of modeling highly heterogeneous outputs. We offer theoretical guarantees for the convergence of the proposed algorithm in Section 2. Compared with other methods in the literature, our proposed WDL framework better captures the nonlinear dependence of the density functions over the covariates. Moreover, this framework produces more convenient and accurate predictions for derived density summary statistics of interest. We demonstrate these advantages, in Section 3, using both simulation experiments and applications to real-world data. We summarize our framework and discuss future directions in Section 4.

\section{Wasserstein Distributional Learning}
In this section, we introduce the proposed Wasserstein distributional learning framework, illustrate its theoretical validations, and offer an efficient computational algorithm based on majorization. 

\subsection{Preliminaries}
\paragraph{Setup and Notations.} For a given subspace $\mathcal{O} \subseteq \mathbb{R}^p$, we define $\mathcal{P}(\mathcal{O})$ as the set of all Borel probability measures on $\mathcal{O}$, and $\mathcal{P}_r(\mathcal{O}) \subset \mathcal{P}(\mathcal{O})$ as its subset with a finite $r$-th moment. In this paper, we consider a joint random process $(\mathbf{X}, \mathcal{G})$, where $\mathbf{X}$ is a $p$-dimensional random covariate vector with a support $\mathcal{X} \subset \mathbb{R}^p$ and a probability density function(PDF) $f_\mathbf{X} \in \mathcal{P}(\mathcal{X})$, and $\mathcal{G}$ is a distributional response representing the distribution for an outcome $Y \in \mathbb{R}$. Let $\mathcal{G}$ be a random draw from $\mathcal{P}(\mathbb{R})$, and $\mathcal{F}$ be the joint distribution of the random process $(\mathbf{X}, \mathcal{G})$ on the product space $\mathcal{X} \times \mathcal{P}(\mathbb{R})$.  Assume a random sample of i.i.d.\ draws from $\mathcal{F}$: $\mathcal{D}_{\mathcal{F}} = \{(x_i, g_i)\}_{i=1}^n$. Taking the climate data in the Introduction Section as an example, for the $i$-th year, $g_i$ is the observed empirical distribution of daily temperatures,  while $x_i$ consists of human activities, sun activities, and other covariates of that year.  Our goal is to model the expected conditional distribution $\mathbb{E}_\mathcal{F}(\mathcal{G}|\mathbf{X})$ from the random sample $\mathcal{D}_{\mathcal{F}} = \{(x_i, g_i)\}_{i=1}^n$. 


\paragraph{Wasserstein Distance.} 
Suppose $f_1$ and $f_2 $ are two continuous PDFs from $\mathcal{P}_2(\mathbb{R})$, the $2$-Wasserstein distance $W_2(f_1, f_2)$ between $f_1$ and $f_2$ is defined as
\begin{equation*}
W_2(f_1, f_2) = \Big[\int_0^1 \big(F_1^{-1}(s) - F_2^{-1}(s)\big)^2 ds\Big]^{\frac{1}{2}},
\end{equation*}
where $F_1^{-1}$ and $F_2^{-1}$ are the quantile functions derived from $f_1$ and $f_2$, respectively. As mentioned in Introduction, Wasserstein distance is a well-defined metric in the space of distribution functions, and thus is particularly suitable for modeling heterogeneous functional outcomes. In this paper, we focus on the one-dimensional $2$-Wasserstein distance. For more general definitions of the Wasserstein distance in multi-dimensional spaces, we refer the readers to \cite{panaretos2019statistical, villani2008optimal}.  

\subsection{Overview of Wasserstein Distributional Learning}
Functional mapping from the scalar covariates to the density outputs is challenging due to the infinite dimensions of the output space. To circumvent this difficulty, we assume a semi-parametric conditional distribution family, 
\begin{equation}
    \mathfrak{F} \otimes \mathcal{T}  = \{f_{\boldsymbol{\theta}}\circ  \boldsymbol{\tau}(x) \mid  f_{\boldsymbol{\theta}} \in \mathfrak{F},\,  \boldsymbol{\theta}\in \boldsymbol{\Theta} \subset \mathbb{R}^q; \; 
    \boldsymbol{\tau}(\cdot) \in \mathcal{T}(  \mathbf{\mathcal X}, \boldsymbol{\Theta}); \; x\in \mathcal{X}\subset \mathbb{R}^p\},
\label{eq:semi-para}
\end{equation}
where $\mathfrak{F} = \{f_{\boldsymbol{\theta}} | \boldsymbol{\theta} \in \boldsymbol{\Theta} \subset \mathbb{R}^q\}$ is a parametric distribution family, $\mathcal{T}(\mathcal{X}, \boldsymbol{\Theta})$ is a non-parametric functional family of mappings from the covariate space $\mathcal{X}$ to the distribution parameter space $\boldsymbol{\Theta}$. 

The above semi-parametric conditional distribution family is sufficiently large and flexible such that the expected conditional distributions $\mathbb{E}_\mathcal{F}(\mathcal{G}|\mathbf{X} = x)$ can be well approximated by its elements. In other words, there exists a mapping $\boldsymbol{\tau}(\cdot)  \in \mathcal{T}(\mathcal{X}, \boldsymbol{\Theta})$, such that for any $x \in \mathcal{X}$, $\mathbb{E}_\mathcal{F}(\mathcal{G}|\mathbf{X} = x) \approx f_{\boldsymbol{\theta} = \boldsymbol{\tau}(x)}$. 
For a given set of observations $\mathcal{D}_{\mathcal{F}} = \{(x_i, g_i)\}_{i=1}^n$, we propose the following functional Wasserstein regression to identify the optimal mapping $\widehat{\boldsymbol{\tau}}(\cdot)$ in a specified functional space $\mathcal{T}(\mathcal{X}, \boldsymbol{\Theta})$ and minimize the Wasserstein loss that is evaluated at the observed empirical distributions and covariates:
\begin{equation} 
\widehat{\boldsymbol{\tau}}(\cdot) = \argmin_{\boldsymbol{\tau}(\cdot) \in \mathcal{T}(\mathcal{X}, \boldsymbol{\Theta})} \sum_{i=1}^n W_2^2(g_i, f_{\boldsymbol{\theta} = \boldsymbol{\tau}(x_i)}) = \argmin_{\boldsymbol{\tau}(\cdot) \in \mathcal{T}(\mathcal{X}, \boldsymbol{\Theta})}  \sum_{i=1}^n
\int_0^1
\big(F_{g_i}^{-1}(s) - F_{\boldsymbol{\theta} = \boldsymbol{\tau}(x_i)}^{-1}(s)\big)^2 ds,   
\label{eq:1}
\end{equation}
where $F_{g_i}^{-1}(s) $ is the quantile function derived from $g_i$ and $F_{\boldsymbol{\theta} = \boldsymbol{\tau}(x_i)}^{-1}(s)$ is that derived from $f_{\boldsymbol{\theta} = \boldsymbol{\tau}(x_i)}$. For simplicity, we refer to them as $F_{\boldsymbol{\tau}(x_i)}^{-1}$ and $f_{\boldsymbol{\tau}(x_i)}$ in the rest of the paper.

In the sections below, we consider a class of Semi-parametric Conditional Gaussian Mixture Model (SCGMM) as $\mathfrak{F} \otimes \mathcal{T}$, establish its universal approximation property under the Wasserstein loss, and prove the uniform consistency of the resulting estimates $\widehat{\boldsymbol{\tau}}(\cdot)$.  We also present a Majorization-minimization boosting framework for estimating $\widehat{\boldsymbol{\tau}}(\cdot)$ efficiently.

\subsection{Semi-parametric Conditional Gaussian Mixture Model}
\label{sec:tree}
To account for the heterogeneity in the functional output, we propose a class of Semi-parametric Conditional Gaussian Mixture Model (SCGMM), where the mean, variance and weight parameters of the Gaussian Mixture are unspecified functions of the covariates $\mathcal{X}$, which can be written as
\begin{equation}
f_{\boldsymbol{\tau}(x)} = \sum_{k=1}^K \pi_k(x) \mbox{N}\{\mu_k(x), \sigma^2_k(x)\},
\label{eq:model}
\end{equation}
where $\mbox{N}$ represents a Gaussian distribution, $\mu_k(x)$ and $\sigma^2_k(x)$ are the mean and variance of the $k$-th component, and $\pi_k(x)$ is the weight associated with the $k$-th component. 
We assume that all the parameters are unknown functions of the covariate $x$, and denote $\boldsymbol{\tau}(x) = \{\pi_k(x), \mu_k(x),\sigma^2_k(x)\}_{k=1}^K$ as the collection of all \emph{parameter functions} of model~\eqref{eq:model}.

By definition, the SCGMM model automatically satisfies the non-negativity and unit-integral constraint of density functions. We propose to use 2-Wasserstein distance for the functional regression of density/distribution outcomes. The validity of the proposed Wasserstein distributional learning is guaranteed by the two theorems that we established in the subsequent section. In this first theorem, we show that the SCGMM is dense in the Wasserstein space so that it well approximates any regular conditional distributions; In the second one, we prove the optimizer from the Wasserstein regression, $f_{\widehat{\boldsymbol{\tau}}(x)}$, is uniformly consistent to $\mathbb{E}_\mathcal{F}(\mathcal{G}|\mathbf{X})$, the true conditional distribution over $\mathcal{X}$. In practice, the model training of SCGMM under the Wasserstein distance loss is prohibitively difficult, which is the major obstacle limiting its applicability. The Wasserstein loss function is non-convex over the mixture model parameters $\{\pi, \mu, \sigma\}$, and the computation of the loss gradient is extremely noisy in practice. In our Wasserstein distributional learning framework, we designed an efficient boosted Majorization-Minimization optimization procedure to address the computational challenges, and also established its convergence property. 



\subsection{Universal Approximation and Consistency}
In this section, we investigate the universal approximation and consistency of the SCGMM under the Wasserstein distance. We first establish the dense property of the Gaussian Mixture Model (GMM) under the Wasserstein distance in the following Lemma, then generalize the results to the SCGMM in the later theorem.

\begin{lemma}
Let $\mathfrak{F}_G \subset \mathcal{P}_2(\mathbb{R})$ be the family of finite Gaussian mixture distributions over the real line $\mathbb{R}$, i.e., 
\begin{equation*}
    \mathfrak{F}_G = \Bigg\{\sum_{k = 1}^K \pi_k \mbox{N}(\mu_k, \sigma^2_k)\Bigg|K \in \mathbb{N}_+  \Bigg\}.
\end{equation*}
Then the family $\mathfrak{F}_G$ is dense in $\big(\mathcal{P}_2(\mathbb{R}), W_2\big)$, i.e., the set of probability measures of a finite second moment with the 2-Wasserstein distance.
\label{lemma:dense}
\end{lemma}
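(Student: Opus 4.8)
The plan is to exploit the one-dimensional identity $W_2^2(\mu,\nu)=\int_0^1\big(F_\mu^{-1}(s)-F_\nu^{-1}(s)\big)^2\,ds$ recorded above, which identifies $(\mathcal{P}_2(\mathbb{R}),W_2)$ isometrically with the convex cone of nondecreasing functions inside $L^2([0,1])$: a measure $\mu$ has finite second moment precisely when its quantile function $q=F_\mu^{-1}$ lies in $L^2([0,1])$, since $\int_0^1 (F_\mu^{-1}(s))^2\,ds=\int_{\mathbb{R}} x^2\,d\mu(x)$. Density in $W_2$ therefore reduces to $L^2([0,1])$-approximation of $q$ by quantile functions of Gaussian mixtures. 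I would carry this out in two stages — first approximate $\mu$ by a finitely supported (discrete) measure, then mollify that discrete measure into a genuine Gaussian mixture — and control each stage separately before combining them by the triangle inequality.

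First I would show that finitely supported measures are $W_2$-dense. Partitioning $[0,1]$ into subintervals and replacing $q$ by its average (conditional expectation) on each piece produces a nondecreasing step function $q_K$ taking finitely many values $v_1\le\cdots\le v_K$ on intervals of lengths $\pi_1,\dots,\pi_K$; refining the partition gives $\|q-q_K\|_{L^2([0,1])}\to 0$ by the $L^2$ density of step functions, using that averaging a nondecreasing function over consecutive intervals yields nondecreasing values. Each such $q_K$ is exactly the quantile function of the discrete measure $\hat\mu=\sum_{k=1}^K \pi_k\,\delta_{v_k}$, so $W_2(\mu,\hat\mu)=\|q-q_K\|_{L^2}$ can be made smaller than any prescribed $\epsilon/2$.

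Second I would smooth $\hat\mu$ by convolving with a narrow Gaussian. For the coupling $(V,V+Z)$ with $V\sim\hat\mu$ and $Z\sim \mbox{N}(0,\sigma^2)$ independent, one obtains $W_2^2(\hat\mu,\hat\mu*\mbox{N}(0,\sigma^2))\le \mathbb{E}[Z^2]=\sigma^2$; equivalently this follows from the mixture-convexity bound $W_2^2\big(\sum_k\pi_k\alpha_k,\sum_k\pi_k\beta_k\big)\le\sum_k\pi_k W_2^2(\alpha_k,\beta_k)$ applied to $\alpha_k=\delta_{v_k}$ and $\beta_k=\mbox{N}(v_k,\sigma^2)$, noting $W_2^2(\delta_{v_k},\mbox{N}(v_k,\sigma^2))=\sigma^2$. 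The crucial point is that the smoothed measure $\hat\mu*\mbox{N}(0,\sigma^2)=\sum_{k=1}^K \pi_k\,\mbox{N}(v_k,\sigma^2)$ is itself a finite Gaussian mixture, hence a member of $\mathfrak{F}_G$. Choosing $\sigma<\epsilon/2$ and combining the two stages gives $W_2\big(\mu,\sum_k\pi_k\mbox{N}(v_k,\sigma^2)\big)<\epsilon$.

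I expect the only real care to be needed in the first stage: one must ensure the $L^2$ approximant of the quantile function stays nondecreasing, so that it corresponds to an honest probability measure, and that the tails are controlled — this is exactly where the finite-second-moment hypothesis $q\in L^2([0,1])$ is essential. The second stage is routine once the mixture-convexity inequality for $W_2^2$ is in hand.
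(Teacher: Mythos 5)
Your proof is correct, and it shares the paper's two-stage skeleton --- first approximate the target measure by a finitely supported measure to within $\varepsilon/2$, then mollify each atom into a Gaussian and conclude by the triangle inequality --- with the second stage being essentially identical (the paper also bounds $W_2^2\big(\sum_k\pi_k\delta_{x_k},\sum_k\pi_k\mbox{N}(x_k,\sigma^2)\big)$ by $\sum_k\pi_k W_2^2(\delta_{x_k},\mbox{N}(x_k,\sigma^2))=\sigma^2$ via the mixture-convexity inequality of its Theorem 2.4). Where you genuinely diverge is the first stage. The paper works in the spatial domain, following Villani's Theorem 6.18: it truncates the tails using the finite second moment ($\mathbb{E}_g[X^2\mathbf{1}_{\{|X|>M\}}]<\varepsilon^2/9$), covers $[-M,M]$ by finitely many balls of radius $\varepsilon/3$, collapses each piece to its center via a map $J$, and bounds $W_2(g,\tilde g)$ from above by the coupling $(X,J(X))$. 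You instead work in the quantile domain, using the isometry $W_2(\mu,\nu)=\|F_\mu^{-1}-F_\nu^{-1}\|_{L^2([0,1])}$ and projecting the quantile function onto step functions by conditional expectation on a refining partition, with monotonicity preserved so the projection is again a quantile function. Your route is cleaner in one dimension --- the discretization error is computed exactly rather than bounded by a coupling, and the tail control is absorbed into $L^2$ convergence of the projections rather than handled by a separate truncation --- but it is intrinsically one-dimensional, whereas the paper's covering-and-coupling argument extends verbatim to $\mathcal{P}_2(\mathbb{R}^d)$ and more general base spaces. The one point you rightly flag, that the $L^2$ projection of a nondecreasing function onto a partition is nondecreasing, does hold (averages over consecutive intervals of a nondecreasing function are ordered), so there is no gap.
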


\begin{proof}
In the Appendix.
\end{proof}

Lemma \ref{lemma:dense} is a Wasserstein generalization of Theorem 1 in \cite{park1991universal}, which proves the dense property of the Gaussian Mixture Model under the $L^p$ distance for $\forall p \geq 1$. It indicates that a large class of density functions can be well approximated by a Gaussian Mixture distribution with a sufficiently large number $K$ and appropriately chosen component-wise parameters. 

Next, we show that such universal approximation property also holds for the SCGMM under the Wasserstein distance, where all the parameters are step functions of $\mathbf{X}$. Following our notations, we denote the distribution function of $\mathbf{X}$ by $P_\mathbf{X}$. For simplicity, we denote $H(x) \triangleq \mathbb{E}_{\mathcal{F}}(\mathcal{G}|\mathbf X = x) \in \mathcal{P}(\mathbb{R})$  as the expected conditional density of $Y$ given $\mathbf X = x$, and denote both $\boldsymbol{\tau}$ and $\boldsymbol{\tau}(\cdot)$ as the mapping from $\mathcal{X}$ to $\boldsymbol{\Theta}$ without distinction. 

Now, we introduce the following sufficient assumptions for the universal approximation property of SCGMM. 

\begin{assumption}
1). The covariate $\mathbf{X}$ follows a light-tailed distribution, i.e., there exist constants $\lambda$ and $M_0$, such that $P_\mathbf{X}(\|\mathbf{X}\|_2 > M) < \exp(-\lambda M)$ for any $M > M_0$;

2). $H(x)$ has a finite second moment for $x \in \mathcal{X}$ almost everywhere;

3). $H(x): \mathcal{X} \longrightarrow \mathcal{P}(\mathbb{R})$ is Lipschitz continuous, i.e., there exists a real constant $L > 0$ such that, for all $x_1$ and $x_2$ in $\mathcal{X}$,  
\begin{equation*}
W_2(H(x_1), H(x_2)) \leq L \|x_1 - x_1 \|_2.
\end{equation*}

\end{assumption}

Regarding the above assumptions, the first one is to constrain the speed of decay of the covariates, the second one is to guarantee that the Wasserstein distance is finite for the expected conditional density $H(x)$, and the third one adds continuity constraint to $H(x)$ such that its functional dependence over the covariates $x$ can be approximated by simple functions. Under them, we have the main theorem on the uniform approximation of SCGMM under Wasserstein distance.

\begin{theorem}
For any $\varepsilon > 0$, there exists a positive integer $K > 0$, and corresponding Gaussian mixture regression 
$f_{\boldsymbol{\tau}(x)} = \sum_{k=1}^K \pi_k(x) \mbox{N}\left(\mu_k(x), \sigma^2_k(x)\right)$,
such that
\begin{equation*}
    \int_{x\in \mathcal{X}} W_2(H(x), f_{\boldsymbol{\tau}(x)})d P_\mathbf{X}(x) < \varepsilon,
\end{equation*}
where $\boldsymbol{\tau}(x) = \{\pi_k(x), \mu_k(x), \sigma^2_k(x)\}$ are all scalar-valued step functions of $x$. 
\label{theorem:approximation}
\end{theorem}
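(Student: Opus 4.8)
The plan is to split the covariate space into a compact ``core'' region where the Lipschitz structure of $H$ permits a discretization, and a light ``tail'' region whose probability mass is negligible. Throughout I would exploit that $W_2$ is a genuine metric on $\mathcal{P}_2(\mathbb{R})$, so the triangle inequality is available, together with the pointwise density guaranteed by Lemma~\ref{lemma:dense}.

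First I would fix a reference point $x_0 \in \mathcal{X}$ at which $H(x_0)$ has finite second moment (possible by the second assumption) and use Lemma~\ref{lemma:dense} to select a single Gaussian mixture $f_{\boldsymbol{\theta}_0}$ with $W_2(H(x_0), f_{\boldsymbol{\theta}_0}) < \varepsilon/3$. By the Lipschitz assumption, $W_2(H(x), f_{\boldsymbol{\theta}_0}) \le L\|x - x_0\|_2 + \varepsilon/3$ for every $x$. I would then set $\boldsymbol{\tau}(x) = \boldsymbol{\theta}_0$ on the tail $B_M^c = \{\|x\|_2 > M\}$ and control its contribution using the light-tailed assumption: a layer-cake (integration-by-parts) estimate of the form $\int_{B_M^c}\|x\|_2\, dP_\mathbf{X} \le M e^{-\lambda M} + \lambda^{-1} e^{-\lambda M}$ shows that $\int_{B_M^c} W_2(H(x), f_{\boldsymbol{\theta}_0})\, dP_\mathbf{X}$ can be driven below $\varepsilon/3$ by taking $M$ large.

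Next, on the compact ball $B_M$ I would partition the set into finitely many cells $\{C_j\}_{j=1}^J$ of Euclidean diameter at most $\delta$, pick in each a representative $x_j$ at which $H(x_j) \in \mathcal{P}_2(\mathbb{R})$, and again invoke Lemma~\ref{lemma:dense} to obtain a mixture $f_{\boldsymbol{\theta}_j}$ with $K_j$ components satisfying $W_2(H(x_j), f_{\boldsymbol{\theta}_j}) < \varepsilon/3$. Setting $\boldsymbol{\tau}(x) = \boldsymbol{\theta}_j$ for $x \in C_j$, the triangle inequality and the Lipschitz bound give, for $x \in C_j$, $W_2(H(x), f_{\boldsymbol{\tau}(x)}) \le W_2(H(x), H(x_j)) + W_2(H(x_j), f_{\boldsymbol{\theta}_j}) \le L\delta + \varepsilon/3$. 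Integrating over $B_M$ and choosing $\delta$ with $L\delta < \varepsilon/3$ keeps the core contribution below $2\varepsilon/3$, which combined with the tail bound yields the claimed estimate.

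Two bookkeeping points remain. To realize a single integer $K$, note that only finitely many mixtures $f_{\boldsymbol{\theta}_0}, f_{\boldsymbol{\theta}_1}, \dots, f_{\boldsymbol{\theta}_J}$ appear, so $K = \max\{K_0, K_1, \dots, K_J\}$ is finite; any mixture with fewer than $K$ components can be rewritten with exactly $K$ by repeatedly splitting a component into two copies with identical mean and variance and halved weight, which leaves the distribution, and hence every $W_2$ term, unchanged. The main obstacle is the tail estimate: unlike the core, the tail carries no uniform Lipschitz control, so the argument hinges on pairing the linear Wasserstein growth $W_2(H(x), f_{\boldsymbol{\theta}_0}) = O(\|x\|_2)$ against the exponential decay of $P_\mathbf{X}$ to secure both integrability and smallness. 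The remaining steps, namely measurability of the step function $\boldsymbol{\tau}$, finiteness of the cell count, and the $\varepsilon/3$ splitting, are routine.
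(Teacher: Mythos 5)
Your proposal is correct and follows essentially the same route as the paper's proof: truncate to a bounded core using the light-tail assumption (extending $\boldsymbol{\tau}$ by a constant on the tail and trading the linear Wasserstein growth from the Lipschitz bound against the exponential decay), discretize the core into small cells where Lipschitz continuity controls the variation of $H$, apply Lemma~\ref{lemma:dense} at one representative per cell, and unify the component count by splitting components with duplicated parameters. The only difference is cosmetic: you partition the ball directly by a grid, which quietly absorbs the paper's separate treatment of non-closed $\mathcal{X}$ (its Step 3) and its finite-subcover argument.
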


\begin{proof}
In the Appendix.
\end{proof}

To our best knowledge, it is the first attempt to establish the uniform universal approximation of SCGMM under Wasserstein distance. Next we establish the consistency of the estimated SCGMM from Wasserstein regression \eqref{eq:1}. We denote $\mathcal{M}_n(\boldsymbol{\tau}) = \frac{1}{n}\sum_{i = 1}^n W^2_2(f_{\boldsymbol{\tau}(x_i)}, g_i)$ and $\mathcal{M}(\boldsymbol{\tau}) = \int_{\mathcal{X} \times \mathcal{P}(\mathbb{R})}W^2_2(f_{\boldsymbol{\tau}(x)}, g) d\mathcal{F}(x, g)$. Here, $\mathcal{M}_n, \mathcal{M}: \mathcal{T}(\mathcal{X}, \boldsymbol{\Theta}) \rightarrow \mathbb{R}_+$ are non-negative functions defined over the functional space $\mathcal{T}(\mathcal{X}, \boldsymbol{\Theta})$. Further, we introduce metric $d(\cdot, \cdot)$ over $\mathcal{T}(\mathcal{X}, \boldsymbol{\Theta}) \times \mathcal{T}(\mathcal{X}, \boldsymbol{\Theta})$ by $d(\boldsymbol{\tau}_1, \boldsymbol{\tau}_2) \triangleq \sup_{x \in \mathcal{X}} \|\boldsymbol{\tau_1}(x) - \boldsymbol{\tau_2}(x) \|$. With these notations, we introduce the following additional sufficient assumptions for the consistency property of SCGMM.

\begin{assumption}
1). The map $\boldsymbol{\theta} \mapsto f_{\boldsymbol{\theta}}$ is continuous in the sense that for any sequence $\{\boldsymbol{\theta}_n\} \subset \boldsymbol{\Theta}$ and point $\boldsymbol{\theta}_0 \in \boldsymbol{\Theta}$,  $\|\boldsymbol{\theta}_n - \boldsymbol{\theta}_0 \| \rightarrow 0$ implies $W_2(f_{\boldsymbol{\theta}_n}, f_{\boldsymbol{\theta}_0}) \rightarrow 0$;

2). For any $g \in \mathcal{P}_2(\mathbb{R})$, the minimizer set $\argmin_{\boldsymbol{\theta} \in \boldsymbol{\Theta}}W_2(g, f_{\boldsymbol{\theta}})$ is non-empty and only has one element.
\end{assumption}

Regarding the above assumptions, the first one is about the continuity of the model-defined distribution $f_{\boldsymbol{\theta}}$ over the parameter $\boldsymbol{\theta}$, and the second one is to ensure the uniqueness of the minimizer. Under them, we have the main theorem on the consistency of SCGMM with the Wasserstein distance.

\begin{theorem}
Under Assumptions 1-2, suppose $\boldsymbol{\tau}_0 = \argmin_{\boldsymbol{\tau} \in \mathcal{T}(\mathcal{X}, \boldsymbol{\Theta})}\mathcal{M}(\boldsymbol{\tau})$, and the SCGMM estimators $\widehat{\boldsymbol{\tau}}_n = \argmin_{\boldsymbol{\tau} \in \mathcal{T}(\mathcal{X}, \boldsymbol{\Theta})}\mathcal{M}_n(\boldsymbol{\tau})$ all lie in a compact set $S \subset \mathcal{T}(\mathcal{X}, \boldsymbol{\Theta})$, then $\widehat{\boldsymbol{\tau}}_n$ are consistent in the sense that for every $\varepsilon > 0$,
\begin{equation*}
    \mathbb{P}(d(\widehat{\boldsymbol{\tau}}_n(x), \boldsymbol{\tau}_0(x)) \geq \varepsilon) \rightarrow 0.
\end{equation*}

\label{thm:consistency}
\end{theorem}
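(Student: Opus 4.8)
The plan is to argue via the classical route for consistency of $M$-estimators (Wald's consistency theorem, or van der Vaart, \emph{Asymptotic Statistics}, Theorem~5.7): it suffices to establish (i) a uniform law of large numbers on the compact set $S$, $\sup_{\boldsymbol{\tau}\in S}|\mathcal{M}_n(\boldsymbol{\tau})-\mathcal{M}(\boldsymbol{\tau})|\to 0$ in probability, and (ii) a well-separated minimum of the population criterion at $\boldsymbol{\tau}_0$, i.e.\ for every $\varepsilon>0$,
\begin{equation*}
\inf\bigl\{\mathcal{M}(\boldsymbol{\tau}):\boldsymbol{\tau}\in S,\ d(\boldsymbol{\tau},\boldsymbol{\tau}_0)\ge\varepsilon\bigr\}>\mathcal{M}(\boldsymbol{\tau}_0).
\end{equation*}
Given (i) and (ii), consistency follows from a short sandwich argument carried out at the end.

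For (i), I would first fix $\boldsymbol{\tau}\in S$: since the $(x_i,g_i)$ are i.i.d.\ from $\mathcal{F}$, the ordinary strong law gives $\mathcal{M}_n(\boldsymbol{\tau})\to\mathcal{M}(\boldsymbol{\tau})$ provided the integrand is $\mathcal{F}$-integrable, which holds by the finite second-moment condition (Assumption~1) through the envelope $W_2^2(f_{\boldsymbol{\tau}(x)},g)\le 2W_2^2(f_{\boldsymbol{\tau}(x)},\delta_0)+2W_2^2(\delta_0,g)$, both summands being second moments. To upgrade pointwise to uniform convergence, I would use the compactness (hence total boundedness) of $S$ in the metric $d$ together with an equicontinuity bound: writing $a=W_2(f_{\boldsymbol{\tau}_1(x_i)},g_i)$, $b=W_2(f_{\boldsymbol{\tau}_2(x_i)},g_i)$ and using $|a^2-b^2|=|a-b|(a+b)$ with the triangle inequality $|a-b|\le W_2(f_{\boldsymbol{\tau}_1(x_i)},f_{\boldsymbol{\tau}_2(x_i)})$ yields
\begin{equation*}
|\mathcal{M}_n(\boldsymbol{\tau}_1)-\mathcal{M}_n(\boldsymbol{\tau}_2)|\le\frac{1}{n}\sum_{i=1}^n W_2(f_{\boldsymbol{\tau}_1(x_i)},f_{\boldsymbol{\tau}_2(x_i)})\bigl(W_2(f_{\boldsymbol{\tau}_1(x_i)},g_i)+W_2(f_{\boldsymbol{\tau}_2(x_i)},g_i)\bigr).
\end{equation*}
By the continuity of $\boldsymbol{\theta}\mapsto f_{\boldsymbol{\theta}}$ in $W_2$ (Assumption~2), upgraded to a uniform modulus of continuity on the compact parameter range swept out by $S$, the first factor is small whenever $d(\boldsymbol{\tau}_1,\boldsymbol{\tau}_2)$ is small, while the second factor has an integrable envelope; covering $S$ by finitely many $d$-balls and combining the pointwise law at the centers with this oscillation control then gives the uniform statement.

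For (ii), I would invoke continuity of $\boldsymbol{\tau}\mapsto\mathcal{M}(\boldsymbol{\tau})$ (again Assumption~2 plus dominated convergence) and uniqueness of the minimizer. By disintegration, $\mathcal{M}(\boldsymbol{\tau})=\int_{\mathcal{X}}\mathbb{E}\bigl[W_2^2(f_{\boldsymbol{\tau}(x)},\mathcal{G})\mid\mathbf{X}=x\bigr]\,dP_{\mathbf{X}}(x)$, so $\mathcal{M}$ is minimized by minimizing the conditional risk pointwise; moreover, in one dimension the identity $W_2^2(f_{\boldsymbol{\theta}},g)=\int_0^1(F_{\boldsymbol{\theta}}^{-1}-G^{-1})^2\,ds$ shows that the conditional risk equals $W_2^2(f_{\boldsymbol{\theta}},\bar{g}_x)$ up to a term free of $\boldsymbol{\theta}$, where $\bar{g}_x$ is the conditional Wasserstein barycenter (the measure with quantile function $s\mapsto\mathbb{E}[\mathcal{G}^{-1}(s)\mid\mathbf{X}=x]$). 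Thus Assumption~2 applies with target $\bar{g}_x$ and forces the pointwise minimizer to be unique, so $\boldsymbol{\tau}_0$ is the unique minimizer of $\mathcal{M}$ on $S$. The set $\{\boldsymbol{\tau}\in S:d(\boldsymbol{\tau},\boldsymbol{\tau}_0)\ge\varepsilon\}$ is closed in the compact $S$, hence compact, so the continuous $\mathcal{M}$ attains its infimum there at some $\boldsymbol{\tau}^\ast\ne\boldsymbol{\tau}_0$; by uniqueness $\mathcal{M}(\boldsymbol{\tau}^\ast)>\mathcal{M}(\boldsymbol{\tau}_0)$, which is exactly (ii).

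Finally, since $\widehat{\boldsymbol{\tau}}_n$ minimizes $\mathcal{M}_n$ we have $\mathcal{M}_n(\widehat{\boldsymbol{\tau}}_n)\le\mathcal{M}_n(\boldsymbol{\tau}_0)$, whence
\begin{equation*}
0\le\mathcal{M}(\widehat{\boldsymbol{\tau}}_n)-\mathcal{M}(\boldsymbol{\tau}_0)\le\bigl[\mathcal{M}(\widehat{\boldsymbol{\tau}}_n)-\mathcal{M}_n(\widehat{\boldsymbol{\tau}}_n)\bigr]+\bigl[\mathcal{M}_n(\boldsymbol{\tau}_0)-\mathcal{M}(\boldsymbol{\tau}_0)\bigr]\le 2\sup_{\boldsymbol{\tau}\in S}|\mathcal{M}_n(\boldsymbol{\tau})-\mathcal{M}(\boldsymbol{\tau})|,
\end{equation*}
which $\to 0$ in probability by (i); hence $\mathcal{M}(\widehat{\boldsymbol{\tau}}_n)\to\mathcal{M}(\boldsymbol{\tau}_0)$ in probability. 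By (ii) there is $\eta=\eta(\varepsilon)>0$ with $\{d(\widehat{\boldsymbol{\tau}}_n,\boldsymbol{\tau}_0)\ge\varepsilon\}\subseteq\{\mathcal{M}(\widehat{\boldsymbol{\tau}}_n)-\mathcal{M}(\boldsymbol{\tau}_0)\ge\eta\}$, whose probability vanishes, giving the claim. I expect the main obstacle to be the uniform convergence (i): promoting the pointwise law of large numbers to uniformity over the infinite-dimensional $S$ requires a uniform-in-$(x,g)$ modulus of continuity for $\boldsymbol{\tau}\mapsto W_2^2(f_{\boldsymbol{\tau}(x)},g)$ and integrability of the dominating envelope, which is the genuinely delicate part; the separation and sandwich steps are routine once compactness is in hand.
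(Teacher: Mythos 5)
Your overall architecture is sound, but you take a genuinely different route from the paper. The paper follows Wald's method for $M$-estimators (Theorem 5.14 of van der Vaart rather than the Theorem 5.7 route you cite): for each $\boldsymbol{\tau}\neq\boldsymbol{\tau}_0$ it forms the ball-infimum $\boldsymbol{m}_U(x,g)=\inf_{\boldsymbol{\tau}'\in U}W_2^2(f_{\boldsymbol{\tau}'(x)},g)$ over shrinking neighborhoods $U_l\downarrow\boldsymbol{\tau}$, uses monotone convergence and the pointwise continuity of $\boldsymbol{\theta}\mapsto f_{\boldsymbol{\theta}}$ to get $\mathcal{M}(U_l)\uparrow\mathcal{M}(\boldsymbol{\tau})>\mathcal{M}(\boldsymbol{\tau}_0)$, extracts a finite subcover of $B=\{\boldsymbol{\tau}\in S: d(\boldsymbol{\tau},\boldsymbol{\tau}_0)\geq\varepsilon\}$ by compactness, and then applies the \emph{ordinary} strong law to the finitely many fixed functions $\boldsymbol{m}_{U_{\boldsymbol{\tau}_j}}$; since $\mathcal{M}_n(\widehat{\boldsymbol{\tau}}_n)\leq\mathcal{M}_n(\boldsymbol{\tau}_0)\to\mathcal{M}(\boldsymbol{\tau}_0)$, the event $\{\widehat{\boldsymbol{\tau}}_n\in B\}$ has vanishing probability. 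Your route instead goes through a uniform law of large numbers over $S$ plus well-separation of the minimum, followed by the standard sandwich. The trade-off is real: the ULLN, if established, is a quantitatively stronger conclusion and your part (ii) — in particular the quantile-space bias--variance identity reducing the conditional risk to $W_2^2(f_{\boldsymbol{\theta}},\bar g_x)$ plus a constant, which lets Assumption 2.2 deliver uniqueness — is a cleaner justification of well-separation than the paper's bare assertion that $\boldsymbol{\tau}\neq\boldsymbol{\tau}_0$ implies $\mathcal{M}(\boldsymbol{\tau})>\mathcal{M}(\boldsymbol{\tau}_0)$. But Wald's device needs only pointwise (lower semi-)continuity and integrability of finitely many ball-infima, whereas your step (i) demands strictly more than Assumptions 1--2 supply.

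Concretely, the unclosed step is the uniform modulus of continuity you invoke for $\boldsymbol{\tau}\mapsto f_{\boldsymbol{\tau}(\cdot)}$: compactness of $S$ in the sup metric $d$ does \emph{not} make the swept-out parameter set $\bigcup_{\boldsymbol{\tau}\in S}\boldsymbol{\tau}(\mathcal{X})\subset\boldsymbol{\Theta}$ compact when $\mathcal{X}$ is unbounded (Assumption 1.1 only controls the tail of $P_{\mathbf{X}}$, not the range of the parameter functions), so Assumption 2.1, which is merely sequential continuity at each point, cannot be upgraded to the uniform bound $\sup_x W_2(f_{\boldsymbol{\tau}_1(x)},f_{\boldsymbol{\tau}_2(x)})\leq\omega(d(\boldsymbol{\tau}_1,\boldsymbol{\tau}_2))$ without an additional hypothesis (e.g.\ compact $\boldsymbol{\Theta}$, or $\mathcal{T}$ consisting of step functions with uniformly bounded ranges). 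The same issue affects your integrable envelope, which requires $\sup_{\boldsymbol{\tau}\in S}\sup_x W_2(f_{\boldsymbol{\tau}(x)},\delta_0)<\infty$. You flag this honestly as the delicate part, and to be fair the paper's own proof also glosses over integrability hypotheses, but as written your step (i) is a genuine gap relative to the stated assumptions, while the paper's ball-infimum argument avoids needing it at all.
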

\begin{proof}
In the Appendix.
\end{proof}

\subsection{Majorization-Minimization Optimization}
\label{sec:finite}

Optimizing over the Wasserstein distances has been known to be computationally challenging \cite{bernton2019parameter, kolouri2017optimal}, which largely limits its applications. In this section, we introduce a novel and efficient EM-like optimization algorithm that makes Wasserstein distributional learning computationally feasible.  For easier illustration, we introduce the concept in a more simplistic setting without covariate.  We will expand to the full algorithm, integrating boosting machines, in Section~\ref{sec_boosting}.

Suppose $\mathcal{D}_{\mathcal{Y}} = \{y_i\}_{i = 1}^n$ is an i.i.d. random sample, where the empirical distribution is $F_{\mathcal{D}}$, and $\{F_{\boldsymbol{\theta}}: \boldsymbol{\theta} \in \boldsymbol{\Theta}\}$ is a target parametric distribution family. We aim to find the optimal parameters $\widehat{\boldsymbol{\theta}}$ that minimize $L(\boldsymbol{\theta}) = W^2_2(F_{\mathcal{D}}, F_{\boldsymbol{\theta}})$ over all the possible $\boldsymbol{\theta} \in \boldsymbol{\Theta}$. 

To understand the computation challenges in Wasserstein distributional learning, we begin with a simple case. We consider just one distribution component and the targeting distribution 
$F_{\boldsymbol{\theta}}$ belongs to a location-scale family, $\mathfrak{F} =\{ F_{\boldsymbol{\theta}}(y) = F_0(\frac{y - \mu}{\sigma}): \boldsymbol{\theta} = (\mu, \sigma), \mu\in R, \sigma\in R^+\}$, expanding from a standard distribution $F_0(Y)$, then the Wasserstein distance  can be formulated as
\begin{equation*}
L(\boldsymbol{\theta}) = W^2_2(F_{\mathcal{D}}, F_{\boldsymbol{\theta}}) = \int_0^1 \big(F^{-1}_{\boldsymbol{\theta}}(s) - F_{\mathcal{D}}^{-1}(s)\big)^2 ds
= \int_0^1 \big(\mu + \sigma \cdot F^{-1}_0(s) - F_{\mathcal{D}}^{-1}(s)\big)^2 ds.
\end{equation*}
The major convenience yielded by the location scale family is that their quantile functions can be expressed in terms of the linear function of location and scale parameters. Therefore, the above formulation can be treated as a simple OLS regression in the quantile space, with $\mu$ and $\sigma$ representing the parameters of intercept and slope, respectively. Specifically, the minimum Wasserstein estimations of $(\mu, \sigma)$ are 
\begin{equation}
\widehat{\mu} = \int F_{\mathcal{D}}^{-1} ds - \widehat{\sigma} \cdot \int F_0^{-1} ds, \quad
\widehat{\sigma} = \frac{\int F_0^{-1} F_{\mathcal{D}}^{-1} ds - \int F_0^{-1} ds \cdot \int F_{\mathcal{D}}^{-1} ds}{\int (F_0^{-1})^2 ds - (\int F_0^{-1} ds)^2}.
\label{eq:normal}
\end{equation}
In practice, the integrals can be approximated numerically by summation over a sequence of discrete quantile levels $0 < q_1 < ... < q_M < 1$ or using Monte Carlo integration. As a special case of location-scale distribution, the estimations \eqref{eq:normal} also hold for  Gaussian distributions.

When the target distribution $F_{\boldsymbol{\theta}}$ is a Gaussian mixture, the optimization over $L(\boldsymbol{\theta})$  is much more challenging. The major difficulty stems from the complicated nonlinear dependence of the quantile functions $F^{-1}_{\boldsymbol{\theta}}(s)$ over the model parameters $\boldsymbol{\theta} = \{(\pi_k, \mu_k, \sigma_k)\}_{k=1}^K$. Moreover, the Wasserstein loss function does not satisfy the desired convex property, which greatly hinders the stability and convergence of gradient-based optimization algorithms. To address these challenges, we utilize the Majorization-Minimization (MM) procedure and identify an alternative loss function that will serve as the upper bound of the Wasserstein loss. In this case, the main objective is to simultaneously update the decomposition of the empirical distribution $f_\mathcal{D}$ and model distribution $f_{\boldsymbol{\theta}}$. 

Additionally, to ensure the model identifiability of the parametric continuous distribution $f_{\boldsymbol{\theta}}$, we add an order constraint to the component expectations throughout the paper, stating
$\mu_1 \leq \mu_2 \leq ... \leq \mu_K$, where $\mu_k = E_{X \sim f_k}(X), k = 1, ..., K$. Meanwhile, it should be noted that when we refer to an empirical distribution $g = f_{\mathcal{D}}$, the above constraint is not necessary since no model parameters are required. With the definition, we introduce the following theorem, which lays the foundation of the proposed MM algorithm.

\begin{theorem}
	For any two continuous PDFs $f \in \mathcal{P}(\mathbb{R})$ and $g \in \mathcal{P}(\mathbb{R})$, along with any mixture decomposition of $f = \sum_{k=1}^K \pi_k \cdot f_k$ and $g = \sum_{k=1}^K \pi_k \cdot g_k$, the following inequality holds
	\begin{equation}
	W_2^2(f, g)  \leq  \sum_{k=1}^K \pi_k W_2^2(g_k, f_k),
	\label{eq:surrograte}
	\end{equation}
	and the equality holds when
	\begin{equation}
	g_k(x) = g(x)\cdot \frac{f_k \circ F^{-1} \circ G(x)}{\sum_{l=1}^K \pi_l f_l \circ F^{-1} \circ G(x)}, \ \forall x \in \mathbb{R}, \quad \text{for}\quad k = 1, ..., K.
	\label{eq:g-k}
	\end{equation}
	Here, $F^{-1}$ is the QF of $f$, and $G$ is the CDF of $g$.
	\label{thm:ineq}
\end{theorem}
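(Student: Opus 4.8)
The plan is to obtain the inequality from the joint convexity of the squared Wasserstein distance under mixtures, and then to verify the stated equality condition by checking that a single monotone transport map is simultaneously optimal for every component pair.

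For the inequality, I would work from the Kantorovich formulation, in which $W_2^2(f,g)$ is the infimum of $\int_{\mathbb{R}^2}(x-y)^2\,d\gamma(x,y)$ over all couplings $\gamma$ of $f$ and $g$. For each $k$ let $\gamma_k$ be an optimal coupling of $g_k$ and $f_k$, so that $\int (x-y)^2\,d\gamma_k = W_2^2(g_k,f_k)$. The mixture $\gamma := \sum_{k=1}^K \pi_k\gamma_k$ is then a coupling of $g=\sum_k\pi_k g_k$ and $f=\sum_k\pi_k f_k$, because its two marginals are exactly these mixtures. Since $W_2^2(f,g)$ is an infimum over couplings and $\gamma$ is only one admissible choice,
\[
W_2^2(f,g)\le \int (x-y)^2\,d\gamma = \sum_{k=1}^K \pi_k \int (x-y)^2\,d\gamma_k = \sum_{k=1}^K \pi_k W_2^2(g_k,f_k),
\]
which is the claimed bound (and is trivial if some $W_2^2(g_k,f_k)=\infty$).

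For the equality, the key fact is that in one dimension the optimal coupling of $f$ and $g$ is the comonotone one, realized by the increasing map $T:=F^{-1}\circ G$, which pushes $g$ forward to $f$; substituting $s=G(x)$ in the quantile definition gives $W_2^2(f,g)=\int_{\mathbb{R}}(x-T(x))^2 g(x)\,dx$. I would then show that, with $g_k$ chosen as in \eqref{eq:g-k}, the \emph{same} map $T$ transports $g_k$ to $f_k$. Using $T'(x)=g(x)/f(T(x))$ and the pushforward (change-of-variables) formula for $Y=T(X)$ with $X\sim g_k$, the density of $Y$ at $y=T(x)$ equals $g_k(x)\,f(y)/g(x)$; substituting \eqref{eq:g-k} and using $\sum_l \pi_l f_l = f$ collapses this to exactly $f_k(y)$. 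Because $T$ is strictly increasing and satisfies $T_\#g_k=f_k$, it must coincide $g_k$-almost everywhere with the monotone rearrangement from $g_k$ to $f_k$, which is the unique optimal map; hence $W_2^2(g_k,f_k)=\int (x-T(x))^2 g_k(x)\,dx$ for each $k$. Multiplying by $\pi_k$, summing, and using $\sum_k\pi_k g_k=g$ gives $\sum_k \pi_k W_2^2(g_k,f_k)=\int (x-T(x))^2 g(x)\,dx = W_2^2(f,g)$, so equality holds.

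I expect the main obstacle to be the rigorous justification of the pushforward identity $T_\# g_k=f_k$, since the change-of-variables computation implicitly requires $f$ and $g$ to be continuous and strictly positive on their supports, so that $G$ is strictly increasing, $T$ is differentiable, and the quotient $f_k(T(x))/f(T(x))$ in \eqref{eq:g-k} is well defined. I would state these regularity hypotheses explicitly and, on regions where a density vanishes, either restrict to the measure-theoretic support or argue directly in the quantile/coupling formulation to avoid dividing by zero; the remaining steps are routine manipulations.
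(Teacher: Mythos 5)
Your proposal is correct and matches the paper's argument: both prove the inequality by mixing component-wise optimal couplings in the Kantorovich formulation, and both establish equality via the one-dimensional monotone transport map between $f$ and $g$ restricted to the components. The only cosmetic difference is direction — the paper pushes $f_k$ forward under $G^{-1}\circ F$ to construct $g_k$ and observes it takes the form \eqref{eq:g-k}, while you start from \eqref{eq:g-k} and verify that $F^{-1}\circ G$ pushes $g_k$ to $f_k$; these are the same computation read in opposite directions, and your explicit attention to the regularity needed for the change of variables is a point the paper leaves implicit.
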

\begin{proof}
In the Appendix.
\end{proof}

The above theorem shows that for any two continuous density distributions $f, g$ and their mixture decompositions, as long as they share the same mixture component weights, their Wasserstein distance can be bounded by the weighted sum of the Wasserstein distances between each pair of mixture components. By plugging in $g = f_\mathcal{D}, f = f_{\boldsymbol{\theta}}$, and assuming $f_{\boldsymbol{\theta}}$ belongs to the family of Gaussian mixture distributions with $\boldsymbol{\theta} = \{(\pi_k, \mu_k, \sigma_k)\}_{k=1}^K$, then the left side of Equation \eqref{eq:surrograte} in Theorem \ref{thm:ineq} becomes the target loss function $L(\boldsymbol{\theta}) = W_2^2(f_\mathcal{D}, f_{\boldsymbol{\theta}})$. Moreover, if we treat the mixture decomposition component $\{g_k\}_{k=1}^K$ of the empirical distribution as the latent parameters, the right side of Equation \eqref{eq:surrograte} provides a natural upper bound of $L(\boldsymbol{\theta})$. This can serve as a surrogate function that \textit{majorizes} the original objective function. Denoting the right side of Equation \eqref{eq:surrograte} by $R(\boldsymbol{\nu}, \boldsymbol{\theta}) = \sum_{k=1}^K\pi_kW_2^2(g_k, f_k)$ with $\boldsymbol{\nu} = \{g_k\}_{k=1}^K$, $\boldsymbol{\theta} = \{(\pi_k, \mu_k, \sigma_k)\}_{k=1}^K$ and $f_k = \mbox{N}(\mu_k, \sigma_k^2)$, we have $L(\boldsymbol{\theta}) \leq R(\boldsymbol{\nu}, \boldsymbol{\theta})$ by Theorem \ref{thm:ineq}. 

By utilizing the Majorization-Minimization (MM) procedure, we propose an iterative optimization algorithm to find the Minimum Wasserstein estimation of GMM, letting $m$ be the iteration indicator. We denote  $g_k^{(m)}$, $k=1,...,K$,  as the mixture decomposition of $f_\mathcal{D}$ in the $m$-th iteration,  $f_k^{(m)} = \mbox{N}(\mu_k^{(m)}, \sigma_k^{(m)})$ as those of $f_{\boldsymbol{\theta}}$, and $\pi^{(m)}$ as the component weights. The $m$-th iteration consists of the following three important sub-steps.

1. {\bf [$f_k^{(m-1)} \longmapsto f_k^{(m)}$]} Given $\pi_k^{(m-1)}$ and $g_k^{(m-1)}$, this step finds the optimal $f_k^{(m)} = \mbox{N}(\mu_k^{(m)}, \sigma_k^{(m)})$ for minimizing $\sum_{k=1}^K \pi_k^{(m-1)} W_2^2(f_k, g_k^{(m-1)})$. In order to minimize the sum of Wasserstein distances, we can simply find the optimal $f_k^{(m)}$ for each term $W_2^2(f_k, g_k^{(m-1)})$. When $f_k$ is from the Gaussian family, this step corresponds to Equation \eqref{eq:normal} .
	
2. {\bf [$\pi_k^{(m-1)} \longmapsto \pi_k^{(m)}$]} Given $f_k^{(m)}$, this step finds the optimal $\pi_k^{(m)}$ for minimizing the target loss $L(\boldsymbol{\theta}) = W_2^2(f_{\boldsymbol{\theta}}, f_\mathcal{D})$ with $f_{\boldsymbol{\theta}} = \sum_{k=1}^K \pi_k f_k^{(m)}$. This step is the most challenging part in the optimization as there is no explicit formula for the optimal $\pi_k$. Here, we provide two alternative solutions.
	
The first solution is based on gradient descent. The derivative of loss function $L(\boldsymbol{\theta})$ with respect to $\pi_k$ is
\begin{equation*}
\frac{\partial L(\boldsymbol{\theta})}{\partial \pi_k} = \int_{\mathbb{R}} (G^{-1} \circ F_{\boldsymbol{\theta}}(t) - t) \cdot F_k(t) dt.
\end{equation*}
In this case, $F_k$ is the CDF of $f_k^{(m)}$. The convergence of this solution is guaranteed by the convexity of $L(\boldsymbol{\theta})$ over $\pi$, according to Theorem \ref{thm:ineq}. The second solution is to use the Maximization step in EM algorithm for approximating the optimal $\pi$, which is expressed as 
\begin{equation*}
\pi_k^{(m)} = \int_{\mathbb{R}} g^{(m-1)}_k(x) \cdot \frac{\pi_k^{(m-1)}f_k^{(m)}(x)}{\sum_{j=1}^K\pi_j^{(m-1)}f_j^{(m)}(x)} dx.
\end{equation*} 
Since the Wasserstein distance characterizes the weak topology in the density distribution space, it is dominated by KL divergence (strong topology) in the limiting case. This solution does not find the optimal $\pi$ under the Wasserstein loss, but works well in practice and is much easier to implement.
	
3. {\bf [$g_k^{(m-1)} \ \longmapsto \ g_k^{(m)}$]} Given $\pi_k^{(m)}$ and $f_k^{(m)}$, this step finds the optimal $g_k^{(m)}$ for minimizing $\sum_{k=1}^K \pi_k^{(m)} W_2^2(f_k^{(m)}, g_k)$. In other words, the goal is to find the optimal mixture decomposition for the empirical distribution $g = f_\mathcal{D} = \sum_{k=1}^K \pi_k^{(m)} g_k$, such that the surrogate loss $R(\boldsymbol{\nu}, \boldsymbol{\theta})$ is minimized. Since the surrogate function has a lower bound as the original loss function, and the equality conditions are given in Theorem \ref{thm:ineq}, we only need to calculate the updated decomposition as in Equation \eqref{eq:g-k}. Note that this step is quite similar with the E-step in the conventional EM algorithm, and the only difference is that $x$ is replaced by $F_{\boldsymbol{\theta}}^{-1} \circ G(x)$.

Following the Theorem \ref{thm:ineq}, we have the following inequality chain
\begin{equation*}
\begin{split}
L(\boldsymbol{\theta}^{(m)}) &= L(\pi^{(m)}, \mu^{(m)}, \sigma^{(m)})  \leq L(\pi^{(m-1)}, \mu^{(m)}, \sigma^{(m)}) \leq R(\{g_k\}^{(m-1)}, \pi^{(m-1)}, \mu^{(m)}, \sigma^{(m)}) \\
&\leq R(\{g_k\}^{(m-1)}, \pi^{(m-1)}, \mu^{(m-1)}, \sigma^{(m-1)}) = L(\pi^{(m-1)}, \mu^{(m-1)}, \sigma^{(m-1)}) = L(\boldsymbol{\theta}^{(m-1)}).
\end{split}
\end{equation*} It indicates that the original loss function $L(\boldsymbol{\theta})$ decreases during the optimization of the upper bound $R(\boldsymbol{\nu}, \boldsymbol{\theta}) = R\big(\{ g_k\}_{k=1}^K, \pi, \mu, \sigma \big)$ in each iteration step. The outline algorithm hence converges to the minimum of the original loss function. 

\subsection{Boosted Majorization-Minimization Optimization for Wasserstein Distributional Learning } 
\label{sec_boosting}
In this section, we generalize the minimum Wasserstein estimation to the case of Conditional Gaussian Mixtures, where the distribution parameters $\boldsymbol{\theta}$ are unknown functions of $\mathbf{X}$, i.e.   $\boldsymbol{\theta}|x = \boldsymbol{\tau}(x) = \{\pi_k(x), \mu_k(x), \sigma_k(x)\}_{k=1}^K$.  Specifically,  we model them by way of boosted trees \cite{friedman2001greedy}, which are the sum of a number of decision trees learned by boosting. 

In order to satisfy the parameter constraints introduced in Section~\ref{sec:tree}, transformations are necessary for the raw GMM parameters. For the mixing weights $\pi_k(x)$, they must satisfy the constraints $\sum_{k=1}^K \pi_k(x) = 1$ and $\pi_k(x) \geq 0 \ \forall k$. This is achieved by reparameterization through the softmax function
$\pi_k(x) = \frac{\exp(\alpha_k(x))}{\sum_{k=1}^K\exp(\alpha_k(x))}$, where $\alpha_k(x)$'s can take values in $\mathbb{R}$ and therefore modeled as the outputs of boosted trees. Similarly, the scale parameters $\sigma_k(x)$ are represented in terms of the exponential of the boosted tree outputs in order to satisfy the positive constraint $\sigma_k(x) = \exp(z_k(x)) > 0$.
Since the mean components, $\mu_k(x)$'s, are not subject to any constraints, we can simply model them as the direct outputs of boosted tree regression models.

Under the above reparameterizations, we fit the mixture regression model via an iterative boosting algorithm. With a predefined learning rate $\eta > 0$ and simple initial functions $\widehat{\boldsymbol{\tau}}^{(0)}(x) = 0$, suppose we run the algorithm for $M > 0$ steps. At each step $1 \leq m \leq M$, we calculate the current values of all parameters through $\widehat{\boldsymbol{\tau}}^{(m-1)}(x)$ and then determine the next optimal parameter values at each data point following the steps introduced in Section~\ref{sec:finite}. Afterward, a regression tree is fitted for each parameter to approximate the difference between the new value and the current stage. Finally, it is added to the tree ensembles $\widehat{\boldsymbol{\tau}}^{(m-1)}(x)$ to have the updated model $\widehat{\boldsymbol{\tau}}^{(m)}(x)$.  Besides using a fixed number of iterations, we can also use early stopping \cite{yao2007early} to avoid overfitting and accelerate iterative optimization. Specifically, we calculate the Wasserstein loss $\sum_{i = 1}^n W^2(g_i, f_{\boldsymbol{\tau}(x_i)})$ on a validation set along the training process and stop when it is no loner decreasing. To ensure identifiability, we add an extra constraint onto the component means: for any given $\mathbf{X} = x$, it is assumed that $\mu_1(x) \leq \mu_2(x) \leq ... \leq \mu_K(x)$. Therefore, in each optimization step, after updating the component parameters, the components will be sorted before feeding to the boosting machine. Further details can be found in Algorithm B.1 and Figure B.1 of Appendix B.

As opposed to the gradient-based optimization algorithm introduced in \cite{bishop1994mixture, rothfuss2019conditional}, our model training framework utilizes the additive structure of tree ensembles and the upper bound of the Wasserstein loss to achieve a more stable and efficient solution path. In actuality, any machine learning algorithms can be used to represent and estimate the nonparametric coefficient functions. Compared with other models, such as polynomial functions and neural networks, the boosted decision trees achieve a balance between the expressive power and the generalization ability. Moreover, the tree structure greatly improves the model transparency and interpretability.

\section{Experiments}
In this section, we present numerical experiments that demonstrate the estimation performance and prediction accuracy of the proposed Wasserstein distributional learning (WDL) framework using simulations and two real-world applications. In all the experiments, we compared the WDL framework with three existing density regression methods. The first one is the global Fr{\'e}chet regression \cite{petersen2019frechet}, a generalization of linear regression in the quantile functional space under the Wasserstein loss, and the second method is the B-spline smoothed density regression with a centered log-ratio (CLR) transformation \cite{talska2018compositional}, which, for reasons of simplicity, is referred to as the CLR regression for the rest of the paper. The last comparison method is the Mixture Density Network (MDN) \cite{bishop1994mixture}. Different from the previous methods, MDN generats conditional density estimation from scalar outcomes, which is a related but different task than functional regression modeling of density functions. In our experiments, we implemented MDN using Monte Carlo samples from the observed functional outcomes. The main ideas of these methods have been introduced in Section \ref{seq:intro} with more details provided in the Appendix. Reproducible codes for generating all results are available on GitHub (\url{https://github.com/ChengliangTang/Wasserstein-Distributional-Learning}).

\subsection{Simulation Study}
In this experiment, we consider multivariate covariates $\mathbf{X} = (\mathbf{x_1}, \mathbf{x_2}, \mathbf{x_3})$ that are mutually independent and follow the uniform distribution on $[-1, 1]$. The conditional density outcomes $\mathcal{G}$ are generated as follows.
\begin{equation}
\mathcal{F}(\mathcal{G} | \mathbf{X} = x) = \pi_1(x) \cdot f_1(x) + \pi_2(x) \cdot f_2(x),
\label{eq:simulation}
\end{equation}
where
\begin{equation*}
\begin{split}
\pi_1(x) = \frac{1}{1 + \exp(x_3)}, \, &\ \pi_2(x) = \frac{\exp(x_3)}{1+\exp(x_3)}, \\
f_1(x) = \mbox{N}\big(x_1 + \varepsilon, (|x_2|+0.5)^2\big),\, &\ f_2(x) = \mbox{N}\big(2x_2^2 + 2 + \varepsilon, (|x_1|+0.5)^2\big),   
\end{split}
\end{equation*}
with independent random noise variable $\varepsilon \sim \mbox{N}(0, \omega^2)$.

By design, we assume that the conditional distribution $\mathcal{F}(\mathcal{G} | \mathbf{X} = x)$ is a Gaussian mixture. The component-wise means and variances are functions of $\mathbf{x_1}$ and $\mathbf{x_2}$, while the component weights, $\pi_1(x)$ and $\pi_2(x)$, are governed by $\mathbf{x_3}$. In addition, we let $\mu_1(x) \leq \mu_2(x), \forall x \in \mathcal{X}$ to avoid identifiability issues. We also incorporate an additive random noise $\varepsilon$ to the component-wise means $\mu_1(x)$ and $\mu_2(x)$, which is independent of all $\mathbf{X}$ variables. The additive random noise follows a zero-mean Gaussian distribution $\varepsilon \sim \mbox{N}(0, \omega^2)$, which controls the noise level by the standard deviation $\omega$.

Given a noise level $\omega$, we generate, from the model specified in \eqref{eq:simulation}, $N = 200$ random samples $(x_i, g_i) \sim \mathcal{F}(\mathbf{X}, \mathcal{G})$. We first take independent random draws $(x_i, \varepsilon_i)$ from the proposed distributions, and then generate $g_i$, which is represented by the empirical quantile function of $300$ random draws from the conditional distribution $\mathcal{F}(\mathcal{G} | \mathbf{X} = x_i, \varepsilon_i)$. We apply the proposed WDL to the generated $\{(x_i, g_i)\}_{i=1}^{N=200}$ to estimate the parameter functions $\boldsymbol{\tau}(x) = \{\pi(x), \mu(x), \sigma^2(x)\}$, and derive the expected conditional distributions $\mathbb{E}(\mathcal{G} | \mathbf{X})$. 

Using simulated data, we evaluate the performance using three different measures: accuracy in estimating parameter functions, accuracy in estimating conditional distributions, and accuracy in predicting the functional outputs. The first two measures focus on the estimation performance of each method using the training set, and the third measure evaluates their generalization abilities from a training set to an independent test set. For the first two measures, we evaluate their average performance over 500 Monte Carlo replications. On each Monte Carlo replication, we begin by randomly splitting the data into the training set (80\%) and the validation set (20\%), along with choosing the best tuning parameter based on validation results. Then, we refit each model with the best tuning parameter over the entire data set. For the third measure, following what we would use in a real data scenario, we evaluate the performance using a nested five-fold cross validation. In order to minimize the optimistic bias in performance evaluation, hyper-parameter selection and model training were performed using another layer of train-valid split over the training folds, at which point we evaluated the prediction loss on the held-out test fold. We applied parameter tuning to WDL, CLR regression and MDN. Because there are no tuning parameters in the Fr{\'e}chet regression, this tuning step was unnecessary.

\paragraph{Accuracy in estimating the parameter functions.} We first investigate how well the proposed WDL estimates the component-specific parameter functions. Following the algorithms outlined in Section 2.5, we estimate the parameter functions $\pi(x)$, $\mu_1(x)$, $\mu_2(x)$, $\sigma_1(x)$ and $\sigma_2(x)$. To assess the model fit against the ground truth, we use partial dependence plots (PDP) with respect to $\mathbf{x_1}, \mathbf{x_2}$ and $\mathbf{x_3}$ \cite{friedman2001greedy} and compare them with marginal dependence functions derived from model \eqref{eq:simulation}. 
In Figure~\ref{fig:sim-scatter}, for a given noise level $\omega = 0.1$, we plot the average estimated partial dependence functions of 500 Monte Carlo replications (black solid curves), the 95\% confidence band (i.e. $\pm 1.96\times$ Monte Carlo standard deviation), overlaid with the ground truth functions (gray curves). 

\begin{figure}[t]
\centering
\includegraphics[width=\columnwidth]{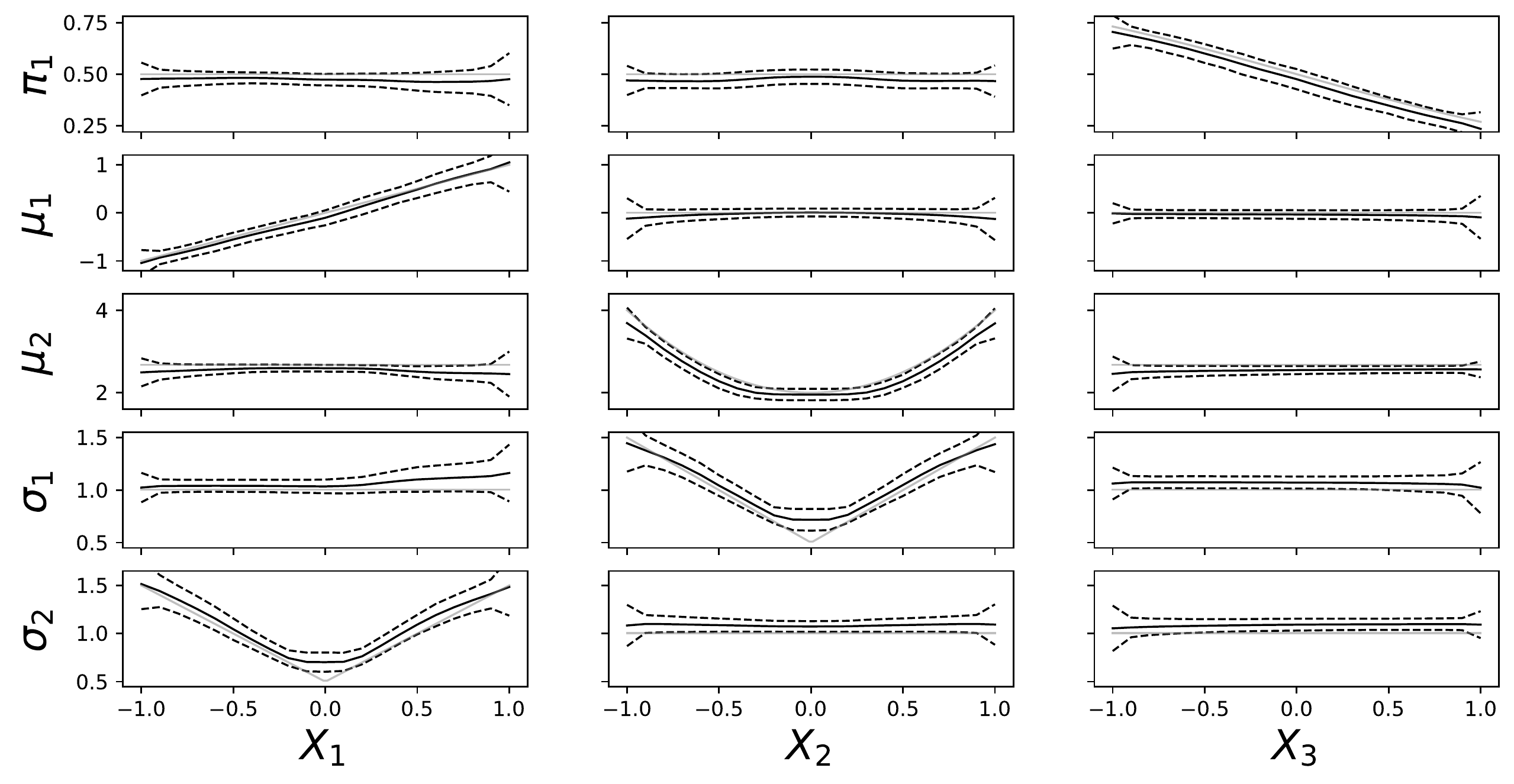}
\caption{Estimated model parameters versus the input scalar variables. The results are based on 500 Monte Carlo replications with noise level $\omega = 0.1$. The black solid curves represent the averaged marginal effect. The dashed curves are the 95\% confidence band. The gray solid curves represent the ground truth.}
\label{fig:sim-scatter}
\end{figure}

\paragraph{Accuracy in estimating conditional distribution outcomes.}
Here we compare how well the four methods estimate the conditional quantile functions of $\mathcal{G}$ given $\mathbf{X}$. We generalize partial dependence plot (PDP) to the quantile functional space to measure the estimation accuracy. At a given quantile level $0 < \rho < 1$, let the target covariate be $\mathbf{X_s}$, and the set of all other covariates be $\mathbf{X_c}$. We define the corresponding functional PDP at point value $\mathbf{X_s} = x_s$ as
$PD_{\mathbf{X_s}}(x_s; \rho) = \int_{x_c \in \mathcal{X}_c} F^{-1}_{\boldsymbol{\tau}(x_s, x_c)}(\rho) f_{\mathbf{X_c}}(x_c) dx_c$, where $f_{\mathbf{X_c}}$ is the marginal density of $\mathbf{X_c}$. Figure \ref{fig:partial-dependence} compares the functional partial dependence plots for the three methods with the ground truth at various quantile levels in $\rho \in \{ 10\%, 30\%, 50\%, 70\%, 90\%\}$. The functional partial dependence plots of the ground truth correspond to the conditional expectation of the functional outputs $\mathbb{E}_{\mathcal{F}}(\mathcal{G} | \mathbf{X})$. The functional partial dependence plots of the three methods were calculated using 500 Monte Carlo replications with noise level $\omega = 0.1$. As shown in Figure \ref{fig:partial-dependence}, our proposed Wasserstein distributional learning (WDL) is capable of capturing the heterogeneity in the partial dependence curves and is closest to the ground truth. The partial dependence curves for Fr{\'e}chet regression are all approximately straight lines of different slopes, due to its linearity assumption. As for the CLR regression, their fitted functional partial dependence curves are also constrained by a linearity assumption after the centered log-ratio transformation. MDN method is not presented here because in the case of GMM with small noise levels it generates similar estimation output with WDL.

\begin{figure}[t]
\centering
\includegraphics[width=\columnwidth]{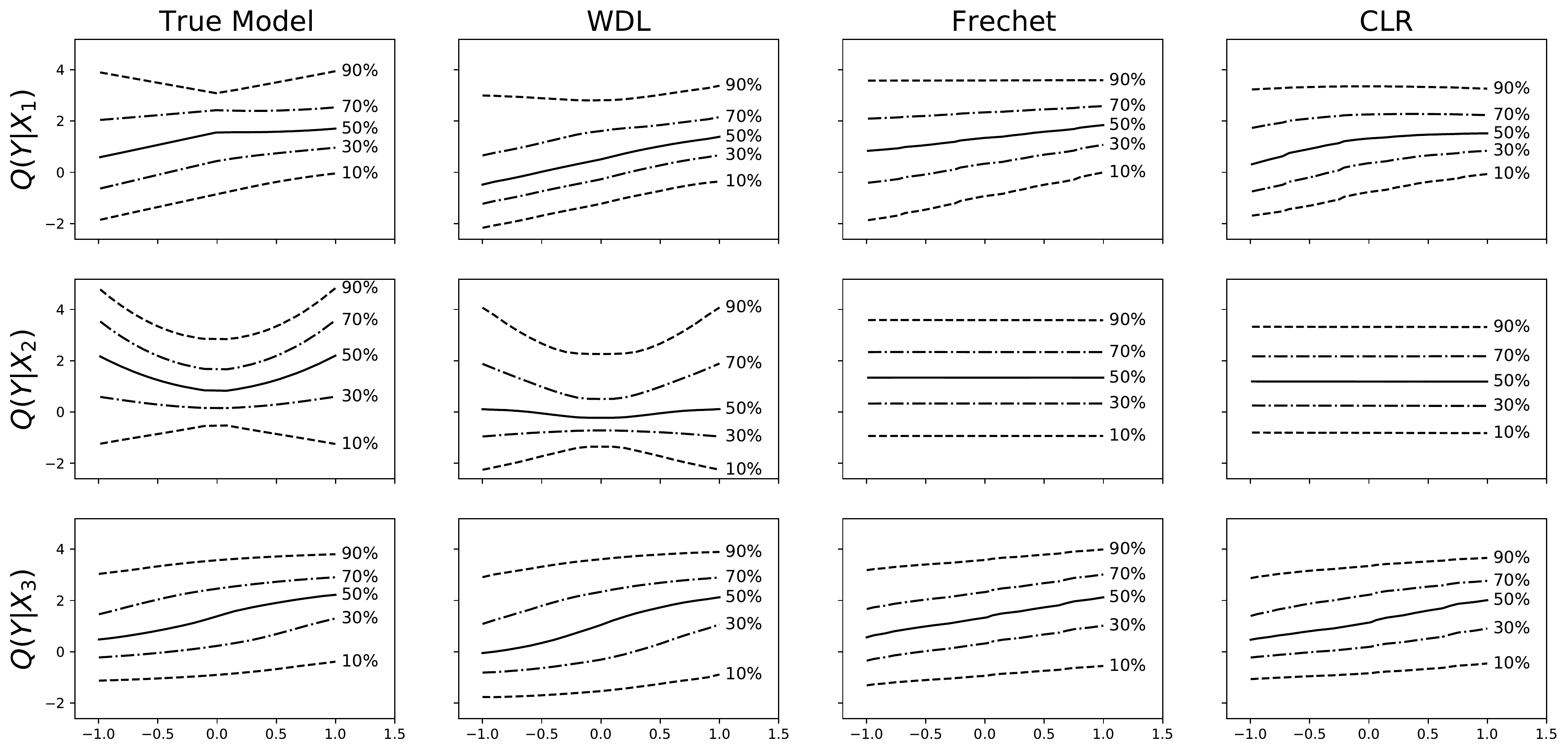}
\caption{Functional partial dependence plot for predicted conditional quantiles versus the input scalar variables. The results are averaged over 500 Monte Carlo replications with noise level $\omega = 0.1$.}
\label{fig:partial-dependence}
\end{figure}

\paragraph{Accuracy in predicting $g_i$'s.} 
We used nested five-fold cross validations to evaluate the prediction performance of WDL in predicting individual density functions $g_i$'s with comparison to the other methods. To numerically measure the discrepancy between the observed quantile functions and their model predictions from the test samples, we defined an approximated Wasserstein distance from a dense array of equally spaced quantile levels $\{0.01, 0.02, ..., 0.99\}$, 
\begin{equation*}
    W_i^2 = \int_{0}^{1} \big(Q_i(s) -  \widehat{Q}_i^{(cv)}(s)\big)^2 ds \approx \frac{1}{100}\sum_{i=1}^{99} \big(Q_i(\frac{i}{100}) -  \widehat{Q}_i^{(cv)}(\frac{i}{100})\big)^2,
\end{equation*}
where $Q_i$ and $\widehat{Q}_i^{(cv)}$ are the observed and predicted quantile functions for a test sample in cross validation, and $W_i^2$ is the Wasserstein prediction loss of the $i$-th observation. We also define an $R^2$-like statistic:
\begin{equation}
    \widehat{R}^2 = 1 - \frac{\sum_{i=1}^N W^2_i \Big/ N}{Var(\mathcal{\mathcal{G}})},
\end{equation}
where the variance of $\mathcal{G}$ is approximated by $Var(\mathcal{\mathcal{G}}) \approx \frac{1}{N} \sum_{i=1}^N \int_{0}^{1} \big(Q_i(s) -  \Bar{Q}(s)\big)^2 ds$ and $\Bar{Q}(s) = \frac{1}{N} \sum_{i=1}^N Q_i(s)$ for any $s \in [0, 1]$. Table 1 summarizes the average Wasserstein loss and R-square at different noise levels ($\omega =0.1, 0.2, 0.5, 1$ and $2$) using a nested five-fold cross validation. The prediction accuracy (measured by average Wasserstein loss) and power (R-square) decline as the noise level increases. At most noise levels ($\omega =0.1, 0.2, 0.5$ and $1$), WDL delivers the best prediction due to its ability to model complicated density output. When the noise level is high ($\omega = 2$), the Fr{\'e}chet regression performs slightly better than the others due to the robustness of its linear model assumption. Another interesting finding is with larger noise level, the performance decay of MDN is more significant than other methods due to overfitting in neural network training. We provide more details about the comparison of WDL and MDN in the Appendix.

\begin{table}
\caption{Predictive performance comparison at different noise levels: Wasserstein loss and $\widehat{R}^2$ (bracket).}
\centering
\begin{tabular}{c c c c c c}
Method & $\omega = 0.1$ & $\omega = 0.2$ & $\omega = 0.5$ &  $\omega=1$ & $\omega=2$ \\ \hline
WDL & \textbf{0.05 (0.92)} & \textbf{0.09 (0.85)} & \textbf{0.30 (0.65)} & \textbf{1.09 (0.34)} & 3.91 (0.02)\\
Fr{\'e}chet & 0.27 (0.54) & 0.30 (0.53) & 0.49 (0.44)& 1.18 (0.28) & \textbf{3.87 (0.03)}\\
CLR   & 0.27 (0.54) & 0.30 (0.53) & 0.50 (0.43) & 1.23 (0.25) & 4.01 (0.00)\\
MDN & 0.13 (0.77) & 0.29 (0.55) & 0.40 (0.54) & 1.21 (0.26) & 5.09 (-0.27)
\end{tabular}
\end{table}


We also compared the three methods using simulations under a linear setting that favors the Fr{\'e}chet regression and CLR approach, where WDL is able to achieve comparable results (see Appendix). This suggests that WDL offers stable performance under numerous settings. We further explored the scenario where the functional outputs are at sparse quantile levels (i.e., $\{0.1, 0.2, ..., 0.9\}$). Under such a setting, CLR and MDN cannot be applied since they require density outcomes, which cannot be easily derived from sparse quantiles. We compared WDL with Fr{\'e}chet regression under the sparse quantile setting and found the results to be comparable to the dense quantile case presented above (see Appendix).

\subsection{Modeling Annual Temperature Distributions}

A fundamental step in climate research is to identify the factors that impact the radiative balance of the planet and are expected to change the temperature distribution. In this section, we apply the proposed Wasserstein distributional learning to understand how the radiative effect of solar irradiance, volcanic eruptions, and CO2, as well as natural climate variability through the El-Ni\~no Southern Oscillation (ENSO) are associated with annual temperature distributions.
These factors have been suggested in climate change literature \cite{fahey2017physical, lewis2017evolution}, and represent natural and human drivers for climate variability and change.  We obtain the daily land-surface average temperatures from Berkeley Earth daily TAVG full dataset \cite{berkeleyearth}, where temperatures are reported as daily anomalies relative to the Jan 1951 $\sim$ Dec 1980 average. We calculate the empirical quantile functions of daily average temperature anomalies for each year between 1880 and 2012 as functional outputs.

The global radiative effects, or ``radiative forcings" used in this example have units of Wm$^{-2}$ and represent the global average energy balance that arises due to changes in atmospheric composition. Here, radiative forcings of solar irradiance, volcanic eruptions, and CO2 as as calculated by the NASA Goddard Institute for Space Studies (GISS) analysis checking the the historical (1850 - 2012) simulation of their dynamical climate model GISS Model E2 \cite{miller2014}. In addition to the three radiative forcing predictors, year-to-year climate variability is summarized through the Ni\~no 3.4 index, a sea surface temperature index that captures the oscillatory of the ENSO system between warm El Ni\~no events and cool La Ni\~na events \cite{mcphaden2020}. Together, these four predictors have been shown to be highly predictive of global annual mean temperature \cite{suckling2017} and are therefore expected to be predictive of the distribution of daily global mean temperature.

\begin{figure}[t]
	\begin{center}
		\includegraphics[width=\textwidth]{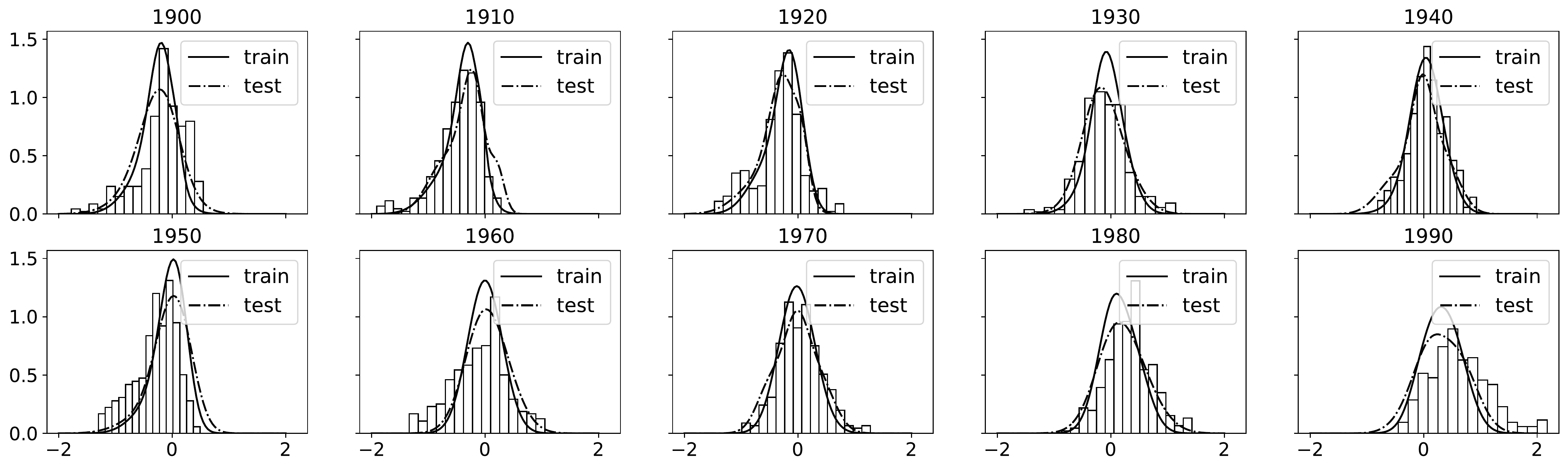}
		\caption{Selected predictions of annual temperature distributions.}
		\label{fig:pdf-pred}
	\end{center}
\end{figure}

\begin{figure}[t]
	\begin{center}
		\includegraphics[width=\textwidth]{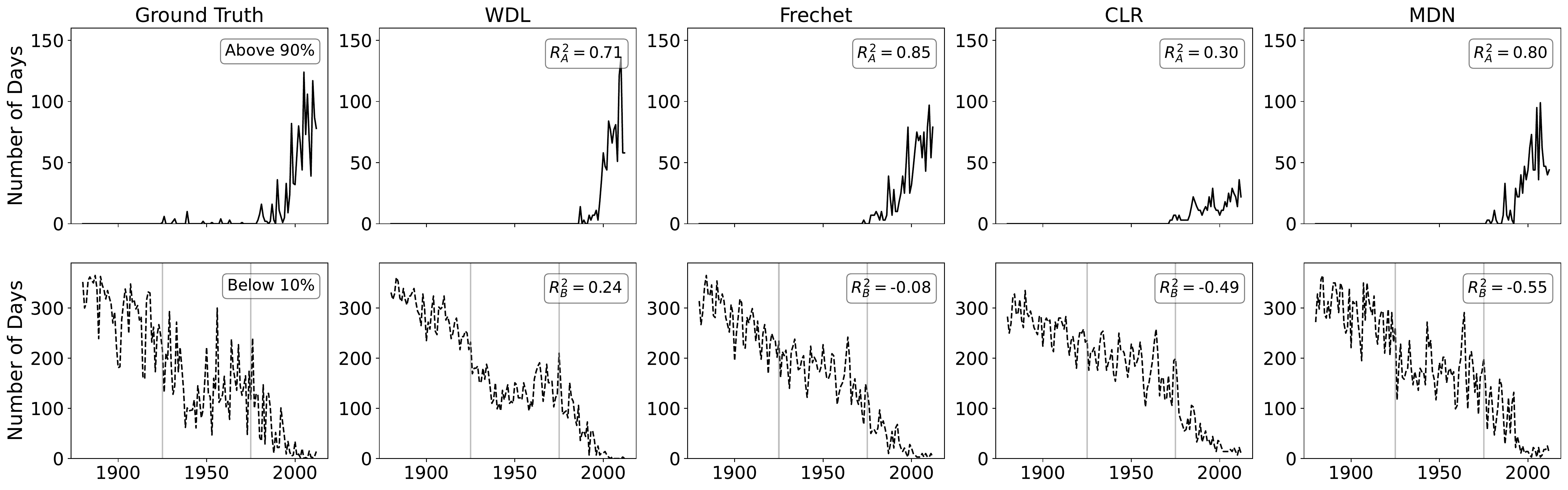}
		\caption{Predicted numbers of days with extreme temperatures. $R^2_A$: Overall test R-squared for predicting the number of days above 90th percentile. $R^2_B$: Local test R-squared for predicting the number of days below 10th percentile in the 1925 - 1975 time window.}
		\label{fig:heatwave-pred}
	\end{center}
\end{figure}
\begin{figure}[t]
	\begin{center}
		\includegraphics[width=\textwidth]{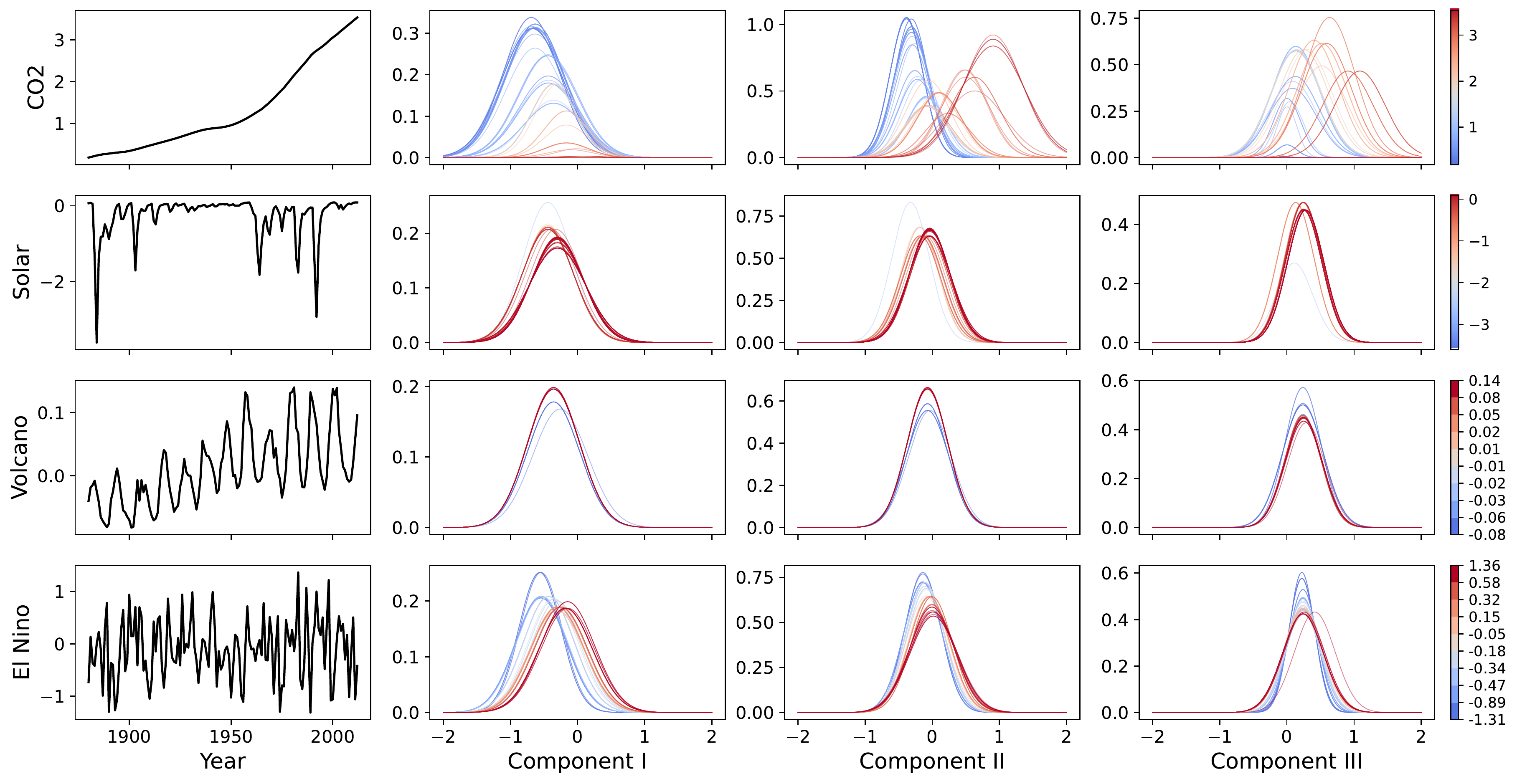}
		\caption{Density curves versus physical drivers. Column 1: temporal trends of physical driver values. Column 2-4: Density curves corresponding to each of three components in the fitted Gaussian mixture, with the color of curves representing the value of the corresponding physical driver.}
		\label{fig:change-pred}
	\end{center}
\end{figure}

Here, we set the number of mixture components as three, which correspond to: low temperatures (Component I), medium temperatures (Component II), and high temperatures (Component III). We fit the proposed Wasserstein distributional learning between the annual temperature quantile functions and the four environmental drivers. To avoid overfitting, we run a nested five-fold cross validation with hyper-parameter selection (learning rate and number of iterations) and calculated the predicted density function for each year when it was in the {\em test} fold. Figure~\ref{fig:pdf-pred} is the histogram of daily temperatures on selected years overlain with the model estimated temperature density curve. The results clearly demonstrate that our method effectively captures the heterogeneity in the functional outputs. Results for each year in the data set can be found in the Appendix. To better evaluate the model performance in predicting extreme temperatures, we calculate the number of days above the 90th percentile daily threshold (high temperatures) or below the 10th percentile daily threshold (low temperatures) for each year. The 10th and 90th percentiles are derived from a 30-year climatological baseline period (1981 - 2010). In Figure~\ref{fig:heatwave-pred}, we visualize the ground truth and the predictions of each method. WDL and Fr{\'e}chet regression achieve the best prediction performance in terms of R-squared. WDL is the only method that is able to capture the cold temperature ``plateau'' and achieves positive R-squared in the 1925 - 1975 time window, which is due to its better characterization for the nonlinear dependence of conditional quantiles (in Appendix).

In Figure~\ref{fig:change-pred}, we visualize the predicted density curves of each component versus the physical drivers. For each given value of a physical driver $\mathbf{X_s} = x_s$, we compute its marginal prediction of the distribution parameters by averaging over the other covariates $\mathbf{X_c}$. Using the component weight $\pi_k$ as an example, the marginal prediction of $\mathbf{X_s} = x_s$ can be formulated as
\begin{equation*}
\Bar{\pi}_k(\mathbf{X_s} = x_s) = \int_{x_c \in \mathcal{X}_c} \hat{\pi}_k(\mathbf{X_s} = x_s, \mathbf{X_c} = x_c) f_{\mathbf{X_c}}(x_c) dx_c.
\end{equation*}
where $\hat{\pi}_k (\cdot)$ is the estimated parameter function, and $f_{\mathbf{X_c}}$ is the marginal density of $\mathbf{X_c}$. Then, based on the average prediction of the parameters $\{\Bar{\pi}_k(\cdot), \Bar{\mu}_k(\cdot), \Bar{\sigma}_k(\cdot) \}_{k = 1, ..., K}$, we calculate the density curve $\{\Bar{\pi}_k(\mathbf{X_s} = x_s)\mbox{N}\{\Bar{\mu}_k(\mathbf{X_s} = x_s), \Bar{\sigma}_k^2(\mathbf{X_s} = x_s)\}$ of each component $k = 1, ..., K$, and depict them using different colors for different values of $\mathbf{X_s} = x_s$. As shown in the figure, CO2 forcing appears to be the most important feature associated with the shifts of the temperature distributions. As the CO2 radiative forcing increases, the mean temperature of all the three components slowly increases as well. Also, ENSO influences the weight and variance of each component, which results in more frequent instances of extreme weather. Radiative forcings due to solar irradiance and volcanic activity did not have a significant impact over the mean temperature. 

\subsection{Modeling Regional Income Distributions}

\begin{figure}[t]
	\begin{center}
		\includegraphics[width=\textwidth]{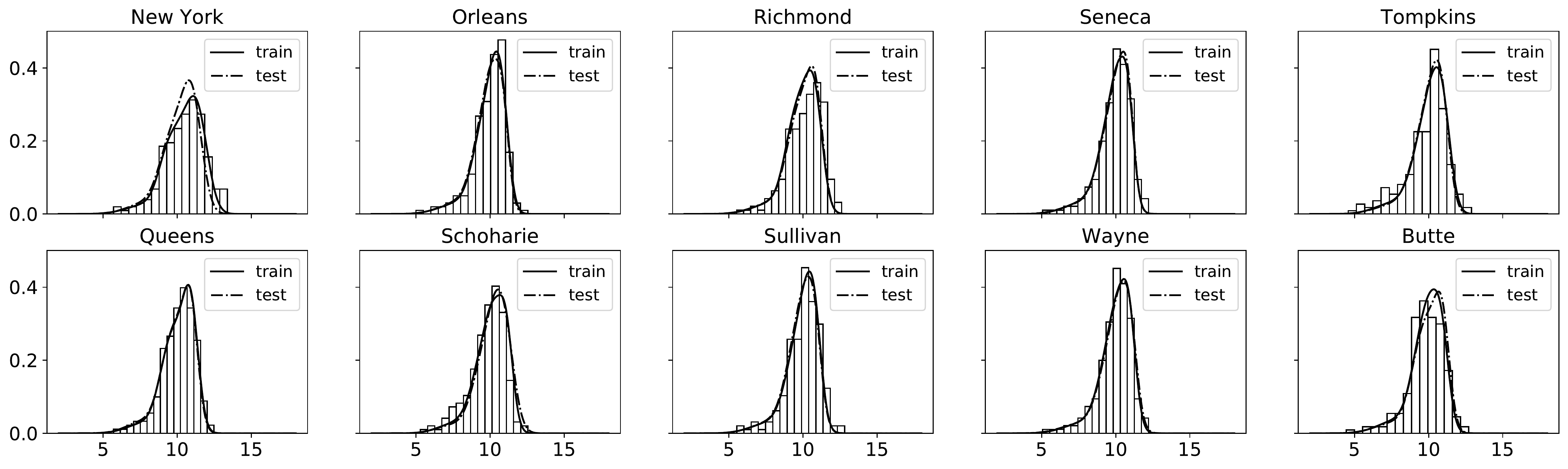}
		\caption{Selected predictions of regional income distributions.}
		\label{fig:income_vis}
	\end{center}
\end{figure}

Modeling income distribution has been a central topic in macroeconomic studies. Several indices, including Gini index, have been used to characterize the income distributions. Traditional analyses focus on one index at a time and models how its mean changes with respect to the covariates. However, the use of a single index only reveals partial distributional associations. On the other hand, simultaneous modeling of multiple indices is usually challenging as their joint distributions might have complex structures due to their unique definitions. In this section, we apply Wasserstein distributional learning to model the regional income distribution of the 167 counties in New York, California and Michigan, from which one could derive multiple indices simultaneously and explicitly study their joint distributions. The income distribution data are from American Community Survey (ACS), which we used with survey weights from 2014 ACS Public Use Microdata Sample (PUMS) \cite{pums} to produce the county-level income distributions. We also collected scalar county-level health indices of the same year (2014) from County Health Rankings \& Roadmaps \cite{health}. Seven important variables were selected for our analysis: Education, Environment, Population, Crime, GDP Per Capita, Diabetes, and Unemployment rate. With all the data in place, the functional regression was conducted at the county level, which means each county served as an independent data point in the regression.

\begin{table}
\caption{Performance comparison in terms of RMSE (and $R^2$) of different indices of income distributions. Results are evaluated on the test folds.}
\centering
\begin{tabular}{c c c c}
Method & Gini Index & Median Income & Poverty Rate\\\hline
WDL & \textbf{0.029 (0.21)} & \textbf{4017.4 (0.37)} & \textbf{0.037 (0.28)}\\
Fr{\'e}chet & 0.052 (-1.56)& 5113.1 (-0.02) & 0.054 (-0.54) \\
CLR   & 0.030 (0.15) & 11065.0 (-3.79) & 0.040 (0.19) \\
MDN & 0.031 (0.08) & 4349.8 (0.26) & 0.040 (0.17)\\
Lasso Regression & 0.030 (0.17) & 4433.3 (0.23) & 0.041 (0.13) \\
Tree Regression & 0.032 (0.06) & 4536.0 (0.20) & 0.039 (0.20)\\
\end{tabular}
\end{table}

In the experiment, we fit the Wasserstein distributional learning to model the association between the regional income distributions and the scalar county health factors. We took the logarithm of income to reduce the skewness. Similar with weather distribution modeling, we illustrated the predicted density for each county when it was in the {\em test} fold. As shown in Figure~\ref{fig:income_vis}, the income distributions vary across counties, and our method is able to capture the distribution accurately. For example, the income distribution in New York has a larger variance, and the income distribution in Orleans has a much higher peak in mode. The WDL framework is able to capture the distinctive features in these two distributions. 

\begin{figure}[t]
	\begin{center}
		\includegraphics[width=\textwidth]{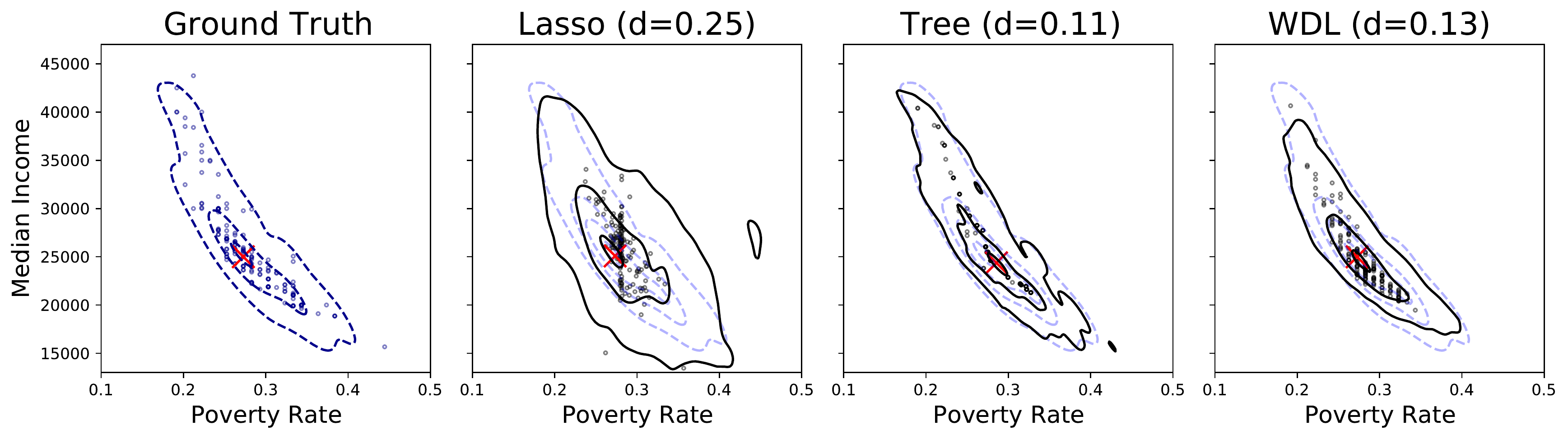}
		\caption{Joint distributions of median income versus poverty rate. Scatterplots overlaid with contour lines. Contour plots of ground truths are visualized as transparent dashed contour curves in each subplot. The modes of the contour plots are marked by red crosses. Wasserstein distances to the ground truth are shown in the sub-figure titles.}
		\label{fig:corr_matrix}
	\end{center}
\end{figure}
From the predicted distributions when the counties were in the test fold, we derived three representative indices commonly used by the economists -- Gini index, median income, and poverty rate, and then compared them with the true values. 
In Table 2, we compared the proposed WDL method with the other methods in terms of estimating individual indices, using RMSE and the conventional $R^2$. In addition to the three comparison methods, we also implemented two methods that directly model the indices: lasso regression and tree regression, which corresponds to the conventional approach of modeling summary statistics in macroeconomics. As shown in Table 2, these index-based methods adequately model the observed indices individually. 
Our WDL algorithm offered the best performance for all the indices, even outperforming the index-based methods (lasso regression and tree regression). In Figure \ref{fig:corr_matrix}, we evaluate the estimated joint distribution of median income versus poverty rate calculated from the predicted income distributions for the counties under study. We compare the estimation using WDL with the ground truth and estimation based on the two index-based methods: lasso regression and tree regression. To get as close to the ground truth as possible, we plotted median income and poverty rate, using the income distributions from individual counties in the data set to calculate. Additionally, we added contour lines that correspond to the joint distribution estimated using kernel density estimation. For each estimation method, we derived the same plot based on model predictions on the test folds with random noises. More specifically, the random noises were sampled from residuals obtained from the training folds. As shown in Figure \ref{fig:corr_matrix}, WDL is able to reconstruct the relationship between indices (summary statistics) without directly modeling them. In particular, predictions based on WDL preserve the true nonlinear association between the two indices without overfitting, offering both stability and flexibility in estimation.

\section{Discussion}
Density-on-scalar functional regression is of interest in many applications. The major obstacle lies in the non-linearity of the density space and the heterogeneity of the density outcomes. Based on the centered log-ratio (CLR) transformation, a number of studies generalized the conventional functional regression algorithms to incorporate the constraints of density functions. However, such generalizations are not applicable to the Wasserstein geometry. As a well-defined metric in the density space, the Wasserstein geometry enjoys multiple nice properties due to its connections to the distribution quantiles. Unfortunately, the estimation under the Wasserstein metric are notoriously challenging, which significantly limits its application in this area.

In this paper, we propose the Wasserstein distributional learning framework, which utilizes the global approximation property of finite mixture models. Compared with prior work, this framework no longer relies on the underlying linear assumption of the model parameters, which greatly increased its flexibility and expressiveness in modeling complicated density outputs. Moreover, efficient optimization algorithms that strongly resemble expectation–maximization (EM) are developed in this work. To our best knowledge, this is the first work that uses majorization to solve the estimation problem associated with the Wasserstein metric and achieves good convergence performance both theoretically and empirically. In the experiments, both simulations and real datasets are used to demonstrate the excellent modeling performance of our method.

The focus of this paper has largely been the parameter estimation of the functional regression model. In the future, however, it would be interesting to develop functional feature importance for the model. One idea is to simply generalize the popular mean decrease impurity (MDI) measure to the case of the Wasserstein loss. The underlying connection between the Wasserstein distance and distribution quantiles is also worth more exploration, and the related functional feature importance at a given quantile level would be helpful to further our understanding of the functional dependence. For practical applications of Wasserstein distributional learning, another interesting future focus is its extension to the more general task of conditional density estimation. Since functional outputs are not primed for analysis, a natural idea is to first apply kernel density estimation to generate distributional outputs, and then feed them to the WDL model. Important topics in this framework include the optimal bandwidth choice for the kernel density estimation.

\newpage
\appendix

\bigskip
\bigskip
\bigskip
\begin{center}
{\LARGE\bf APPENDIX}
\end{center}
\medskip

\renewcommand\thefigure{\thesection.\arabic{figure}}   
\renewcommand\thetable{\thesection.\arabic{table}}
\renewcommand\thealgorithm{\thesection.\arabic{algorithm}}

\section{THEOREM PROOFS}
\subsection{Proof of Lemma 2.1}
\begin{proof}
The proof of this lemma follows the idea of Theorem 6.18. in \cite{villani2003topics}. It suffices to prove that for any probability distribution $g \in \mathcal{P}_2(\mathbb{R})$ and any given constant $\varepsilon > 0$, there exists a finite Gaussian mixture distribution $f = \sum_{k=1}^K \pi_k f_k$, where $f_k = \mbox{N}(\mu_k, \sigma_k^2)$, such that $W_2(f, g) < \varepsilon$.

We prove the above claim in two steps.

First, since $g \in \mathcal{P}_2(\mathbb{R})$, it has a finite second moment, which means $\mathbb{E}_{g}(X^2) < \infty$. Then, there exists a constant $M > 0$ large enough, such that
\begin{equation*}
    \mathbb{E}_{g}[X^2 \mathbf{1}_{\{|X| > M\}}] < \frac{\varepsilon^2}{9}.
\end{equation*}
Cover the compact set $[-M, M]$ by a finite family of balls $\{B(x_k, \varepsilon / 3)\}_{1 \leq k \leq K}$, with centers $x_k \in [-M, M]$, and define
\begin{equation*}
B_k' =
    \begin{cases}
      B(x_1, \varepsilon/3) & \text{if $k=1$}\\
      B(x_k, \varepsilon/3) \backslash \bigcup_{j < k} B(x_j, \varepsilon/3) & \text{if $k=2, ..., K$}
    \end{cases}      
\end{equation*}
Then all $B_k'$ are disjoint and still cover $[-M, M]$.

Define function $J$ on $\mathbb{R}$ by
\begin{equation*}
J(x) =
    \begin{cases}
      x_k & \text{if $x \in B_k' \cap [-M, M]$ for some $k$}\\
      0 & \text{if $x \in \mathbb{R} \backslash [-M, M]$}
    \end{cases}      
\end{equation*}

Then, for any $x \in [-M, M]$, we have $|x - J(x)| < \varepsilon/3$, which leads to the following inequality
\begin{equation}
\begin{split}
    \mathbb{E}_{g}[(X - J(X))^2] 
    &= \mathbb{E}_{g}[(X - J(X))^2\mathbf{1}_{\{|X| > M\}}] + \mathbb{E}_{g}[(X - J(X))^2\mathbf{1}_{\{|X| \leq M\}}] \\
    &\leq \mathbb{E}_{g}[(X-0)^2\mathbf{1}_{\{|X| > M\}}] + \frac{\varepsilon^2}{9}\cdot\mathbb{E}_{g}[\mathbf{1}_{\{|X| \leq M\}}] \\
    &< \frac{\varepsilon^2}{9} + \frac{\varepsilon^2}{9} < \frac{\varepsilon^2}{4}.
\end{split}
\label{ap-eq:coupling}
\end{equation}

Suppose random variable $X \sim g$, we denote $\Tilde{g}$ as the distribution of $J(X)$, saying $J(X) \sim \Tilde{g}$. Then, by the construction of $J$, the distribution $\Tilde{g}$ can be written as $\Tilde{g} = \sum_{k=1}^K \pi_k \delta_{x_k}$, where $\delta_{x_k}$ is the point mass at $x_k$. Moreover, using the definition of the Wasserstein distance, from Equation \eqref{ap-eq:coupling} we have
\begin{equation}
    W_2(g, \Tilde{g}) \leq \sqrt{\mathbb{E}_{g}[(X - J(X))^2] } < \frac{\varepsilon}{2}.
    \label{ap-eq:surrogate}
\end{equation}

Second, we approximate each point mass $\delta_{x_k}$ by a Gaussian distribution $\mbox{N}(x_k, \sigma_k^2)$. Let $f = \sum_{k=1}^K \pi_k \mbox{N}(x_k, \varepsilon^2 / 4)$, then using Theorem 2.4. we can have
\begin{equation*}
\begin{split}
W_2^2(f, \Tilde{g}) 
&= W_2^2\Big(\sum_{k=1}^K \pi_k \delta_{x_k}, \sum_{k=1}^K \pi_k \mbox{N}(x_k, \varepsilon^2 / 4)\Big) \\
&\leq \sum_{k=1}^K \pi_k W_2^2\Big(\delta_{x_k}, \mbox{N}(x_k, \varepsilon^2 / 4)\Big) \\
&= \varepsilon^2 / 4.
\end{split}
\end{equation*}
As a result, we have 
\begin{equation}
W_2(f, \Tilde{g}) \leq \frac{\varepsilon}{2}.    
\label{ap-eq:gmm}
\end{equation}

In conclusion, with $f = \sum_{k=1}^K \pi_k \mbox{N}(x_k, \varepsilon^2 / 4)$ as defined above, combining Equation \eqref{ap-eq:surrogate} and Equation \eqref{ap-eq:gmm} we have
\begin{equation*}
    W_2(f, g) \leq W_2(f, \Tilde{g}) + W_2(\Tilde{g}, g) < \varepsilon,
\end{equation*}
which means the family $\mathfrak{F}_G$ is dense in $(\mathcal{P}_2(\mathbb{R}), W_2)$.
\end{proof}

\subsection{Proof of Theorem 2.2}
\begin{proof}
We prove the theorem in three steps. First, we prove the special case of compact support $\mathcal{X}$. Second, we extend the proof to the case of closed support $\mathcal{X}$. Finally, we prove the theorem for a general $\mathcal{X} \subset \mathbb{R}^p$.

{\bf Step 1.} First, suppose $\mathcal{X}$ is a compact set in $\mathbb{R}^p$, we can prove a stronger version of the theorem, i.e., there exists a step function $\boldsymbol{\tau}(\cdot) \in \mathcal{T}(\mathcal{X}, \boldsymbol{\Theta})$ such that
\begin{equation*}
\begin{split}
W_2\big(H(x), f_{\boldsymbol{\tau}(x)}\big) < \varepsilon, \quad \text{for} \ \forall x\in \mathcal{X}.
\end{split}
\end{equation*}

In fact, by Lipschitz continuity assumption, for $\forall x_1, x_2 \in \mathcal{X}$ and $\|x_1 - x_2 \|_2 < \frac{\varepsilon}{3L}$, we have
\begin{equation*}
    W_2\big(H(x_1), H(x_2)\big) \leq L \cdot \frac{\varepsilon}{3L} = \varepsilon / 3.
    \label{eq: unif}
\end{equation*} 

Let $\delta = \frac{\varepsilon}{6L\sqrt{p}} > 0$, we define $\delta-$box each $x \in \mathcal{X}$ as
\begin{equation*}
B(x, \delta) = \bigotimes_{i = 1}^{p} (x^{(i)} - \delta, x^{(i)} + \delta),
\end{equation*}
which is an open square-shaped neighbourhood covering $x \in \mathcal{X} \subset \mathbb{R}^p$. Also, we have $diam(B(x, \delta)) < \frac{\varepsilon}{3L}$ for each $x$ under the Euclidean distance. Since $\mathcal{X} \subset \bigcup_{x \in \mathcal{X}} B(x, \delta)$, and $\mathcal{X}$ is compact, there exists a finite set $\{x_i\}_{i=1}^N \subset \mathcal{X}$, such that
\begin{equation*}
\mathcal{X} \subset \bigcup_{i=1}^N B(x_{i}, \delta).
\end{equation*}

Define
\begin{equation*}
    \widetilde{B}_i = B(x_i, \delta) \backslash \bigcup_{j < i}B(x_j, \delta),
\end{equation*}
then all $\widetilde{B}_i$ are disjoint and still cover $\mathcal{X}$. With the constructed finite set $\{ \widetilde{B}_i\}_{i = 1}^N$, we define a function $\widetilde{H}(\cdot)$, such that
\begin{equation*}
    \widetilde{H}(x) = H(x_i), \quad \text{if} \ x \in \widetilde{B}_i. 
\end{equation*}
By the definition of $\delta$, we have $W_2\big(H(x), \widetilde{H}(x) \big) < \varepsilon / 3$ for any $x \in \mathcal{X}$. 

By Lemma 2.1, for for each $H(x_i)$, there exists a Gaussian mixture distribution $f_i = \sum_k^{K_i} \pi_{(k; i)} \mbox{N}(\mu_{(k; i)}, \sigma^2_{(k; i)})$ such that $W_2(f_i, H(x_i)) < \varepsilon / 3, i = 1, ..., N$. Let $K = \max_i K_i$, and further decompose each Gaussian mixture distribution into $K$ components. Specifically, for the Gaussian mixture distribution $f_i$ with $K_i < K$ components, we create another Gaussian mixture distribution $\Tilde{f}_i$ by equally dividing weight of the last component into $(K - K_i + 1)$ components. Therefore, without loss of generality, here we simply assume each mixture distribution $f_i$ has the same number of components as $K$, and the components are following the increasing order of their means, saying $\mu_{(1; i)} \leq \mu_{(2; i)} \leq ... \leq \mu_{(K; i)}$. 

With the Gaussian mixture distributions $f_i = \sum_k^{K} \pi_{(k; i)} \mbox{N}(\mu_{(k; i)}, \sigma^2_{(k; i)})$ in place, we construct the following step $\boldsymbol{\tau}(x) = \{ \pi_{k}(x), \mu_{k}(x), \sigma_{k}^2(x)\}_{k = 1}^K$. For each $x \in \mathcal{X}$, it belongs to one and only one $\widetilde{B}_i$. Then, for $k = 1, ..., K$, we let
\begin{equation*}
\pi_{k}(x) = \pi_{(k; i)}, \quad
\mu_{k}(x) = \mu_{(k; i)}, \quad
\sigma_{k}(x) = \sigma_{(k; i)}, \quad \text{if} \ x \in \widetilde{B}_i.
\end{equation*}

By definition, each $\widetilde{B}_i$ is the difference between a series of $\delta-$boxes, which makes their boundaries piecewise axis-parallel. Therefore, the above step function construction $\boldsymbol{\tau}(x)$ is feasible. Let $f_{\boldsymbol{\tau}(x)} = \sum_k^{K} \pi_{k}(x) \cdot \mbox{N}\big(\mu_{k}(x), \sigma_{k}^2(x)\big)$, we have
\begin{equation*}
W_2\big(f_{\boldsymbol{\tau}(x)}, \widetilde{H}(x) \big)  < \varepsilon / 3, \quad \text{for} \ \forall x \in \mathcal{X}.
\end{equation*}

In conclusion, by combining the two parts, we have 
\begin{equation*}
\begin{split}
W_2\big(H(x), f_{\boldsymbol{\tau}(x)}\big) \leq &  W_2\big(H(x), \widetilde{H}(x) \big)  + W_2\big( \widetilde{H}(x), f_{\boldsymbol{\tau}(x)} \big)  \\
< & \varepsilon / 3 + \varepsilon / 3 < \varepsilon, \quad \text{for} \ \forall x \in \mathcal{X}.
\end{split}
\end{equation*}

{\bf Step 2.} Second, we prove the theorem for closed support $\mathcal{X}\subset \mathbb{R}^p$. We choose $M$ large enough, then the integration can be decomposed as
\begin{equation*}
\begin{split}
\int_{x\in \mathcal{X}} W_2(H(x), f_{\boldsymbol{\tau}(x)})d P_\mathbf{X}(x) = &\int_{\{x\in \mathcal{X} | \max_i |x_i| \leq M\}} W_2(H(x), f_{\boldsymbol{\tau}(x)})d P_\mathbf{X}(x) \\
& + \int_{\{x\in \mathcal{X} | \max_i |x_i| > M \}} W_2(H(x), f_{\boldsymbol{\tau}(x)})d P_\mathbf{X}(x).
\end{split}
\end{equation*}

Because $\mathcal{X}\subset \mathbb{R}^p$ is closed, $S = \{x\in \mathcal{X} | \max_i |x_i| \leq M\}$ is compact for any finite $M > 0$. As proved in Step 1, there exist tree models $\boldsymbol{\tau}(x)$ defined over $S$ such that 
\begin{equation*}
    W_2(H(x), f_{\boldsymbol{\tau}(x)}) < \varepsilon / 3, \ \forall x \in S = \{x\in \mathcal{X} | \max_i |x_i| \leq M\}.
\end{equation*}
Thus, we have 
\begin{equation*}
    \int_{\{x\in \mathcal{X} | \max_i |x_i| \leq M\}} W_2(H(x), f_{\boldsymbol{\tau}(x)})d P_\mathbf{X}(x) < \varepsilon / 3.
\end{equation*}

Further, we choose an arbitrary fixed point $x_0 \in S = \{x\in \mathcal{X} | \max_i |x_i| \leq M\}$, and extend $\boldsymbol{\tau}(x)$ to the entire $\mathcal{X}$ by letting $\boldsymbol{\tau}(x) = \boldsymbol{\tau}(x_0)$, for $\forall x \in S = \{x\in \mathcal{X} | \max_i |x_i| > M \}$.

By the Lipschitz continuity assumption, we have
\begin{equation*}
\begin{split}
&\int_{S = \{x\in \mathcal{X} | \max_i |x_i| > M \}} W_2(H(x), f_{\boldsymbol{\tau}(x)})d P_\mathbf{X}(x) \\
\leq &\int_{S} W_2(H(x), H(x_0))d P_\mathbf{X}(x) + \int_{S} W_2(H(x_0), f_{\boldsymbol{\tau}(x)})d P_\mathbf{X}(x) \\
< &\int_{S} L\|x - x_0\|_2d P_\mathbf{X}(x) + \int_{S} W_2(H(x_0), f_{\boldsymbol{\tau}(x_0)})d P_\mathbf{X}(x) \\
< & \ \varepsilon / 3 + \varepsilon / 3, \ \text{as} \ M \longrightarrow \infty.
\end{split}
\end{equation*}

The first term is because of light tail assumption of $P_\mathbf{X}$, and the second term is by the definition of $\boldsymbol{\tau}(x)$.

In conclusion, we have
\begin{equation*}
\begin{split}
\int_{x\in \mathcal{X}} W_2(H(x), f_{\boldsymbol{\tau}(x)})d P_\mathbf{X}(x) = &\int_{\{x\in \mathcal{X} | \max_i |x_i| \leq M\}} W_2(H(x), f_{\boldsymbol{\tau}(x)})d P_\mathbf{X}(x) \\
& + \int_{\{x\in \mathcal{X} | \max_i |x_i| > M \}} W_2(H(x), f_{\boldsymbol{\tau}(x)})d P_\mathbf{X}(x) \\
& < \frac{\varepsilon}{3} + \frac{\varepsilon}{3} + \frac{\varepsilon}{3} = \varepsilon.
\end{split}
\end{equation*}

{\bf Step 3.} Finally, we prove the theorem for general $\mathcal{X} \subset \mathbb{R}^p$. In fact, due to the continuity assumption, we can extend $H(x): \mathcal{X} \longrightarrow \mathcal{P}(\mathbb{R})$ to $\overline{\mathcal{X}}$, the closure of $\mathcal{X}$. We define $\overline{H}(x): \overline{\mathcal{X}} \longrightarrow \mathcal{P}(\mathbb{R})$ as follows
\begin{equation*}
\overline{H}(x) =
    \begin{cases}
      H(x) & \text{if $x \in \mathcal{X}$}\\
      \lim_{y \rightarrow x, \  y \in \mathcal{X}} H(y) & \text{if $x \in \overline{\mathcal{X}} \setminus \mathcal{X}$}
    \end{cases}      
\end{equation*}
Moreover, we can generalize $P_\mathbf{X}$ to $\overline{P}_{\overline{\mathbf{X}}}$ by letting $\overline{P}_{\overline{\mathbf{X}}}(A)= P_\mathbf{X}(A)$ for any $A \subset \mathcal{X}$ and $\overline{P}_{\overline{\mathbf{X}}}(\overline{\mathcal{X}} \setminus \mathcal{X}) = 0$.

By the result of Step 2, we can find $f_{\boldsymbol{\tau}(x)}$ such that the condition is satisfied. Therefore, over $\mathcal{X}$, we have
\begin{equation*}
\begin{split}
&\int_{\mathcal{X}} W_2(H(x), f_{\boldsymbol{\tau}(x)})d P_\mathbf{X}(x) \\
= &\int_{\overline{\mathcal{X}}} W_2(\overline{H}(x), f_{\boldsymbol{\tau}(x)})d \overline{P}_{\overline{\mathbf{X}}}(x) < \varepsilon.
\end{split}
\end{equation*}
\end{proof}

\subsection{Proof of Theorem 2.3}
\begin{proof}
This theorem is an extension of Theorem 5.14 in \cite{van2000asymptotic}. The referenced theorem proves the consistency of M-estimators under regularity assumptions. In our case, the SCGMM estimators are a special case of M-estimators if we generalize the model parameter space $\boldsymbol{\Theta}$ from a Euclidean space into the functional space $\mathcal{T}(\mathcal{X}, \boldsymbol{\Theta})$.

For any $\widehat{\boldsymbol{\tau}} \in \mathcal{T}(\mathcal{X}, \boldsymbol{\Theta})$ such that $\mathcal{M}(\widehat{\boldsymbol{\tau}}) < \infty$, let $U_l \downarrow \widehat{\boldsymbol{\tau}}$ be a decreasing sequence of open balls covering $\widehat{\boldsymbol{\tau}}$ of diameter converging to zero. For any $(x, g) \in \mathcal{X} \times \mathcal{P}_2(\mathbb{R})$ and any open ball $U \subset \mathcal{T}(\mathcal{X}, \boldsymbol{\Theta})$, define $\boldsymbol{m}_U(x, g) = \inf_{\boldsymbol{\tau} \in U} W_2^2(f_{\boldsymbol{\tau}(x)}, g)$ and $\mathcal{M}(U) = \int_{\mathcal{X} \times \mathcal{P}_2(\mathbb{R})}\boldsymbol{m}_U(x, g) d\mathcal{F}(x, g)$. Then, by the construction of $\{U_l\}$ and the continuity of $f_{\boldsymbol{\theta}}$, the sequence $\boldsymbol{m}_{U_l}(x, g) \uparrow W_2^2(f_{\widehat{\boldsymbol{\tau}}(x)}, g)$ for $(x, g) \in \mathcal{X} \times \mathcal{P}_2(\mathbb{R})$ almost surely. Further, by the dominated convergence theorem and the finite assumption of $W_2^2(f_{\widehat{\boldsymbol{\tau}}(x)}, g)$, we have $\int_{\mathcal{X} \times \mathcal{P}_2(\mathbb{R})}\boldsymbol{m}_{U_l}(x, g) d\mathcal{F}(x, g) \uparrow \mathcal{M}(\widehat{\boldsymbol{\tau}}(x)) = \int_{\mathcal{X} \times \mathcal{P}_2(\mathbb{R})}W_2^2(f_{\widehat{\boldsymbol{\tau}}(x)}, g) d\mathcal{F}(x, g) < \infty$.

By definition, $\boldsymbol{\tau}_0 = \argmin_{\boldsymbol{\tau} \in \mathcal{T}(\mathcal{X}, \boldsymbol{\Theta})}\mathcal{M}(\boldsymbol{\tau})$. For any $\boldsymbol{\tau} \neq \boldsymbol{\tau}_0$, due to the uniqueness assumption we have $\mathcal{M}(\boldsymbol{\tau}) > \mathcal{M}(\boldsymbol{\tau}_0)$. Combine this with the preceding paragraph to see that for every $\boldsymbol{\tau} \neq \boldsymbol{\tau}_0$, there exists an open ball $U_{\boldsymbol{\tau}}$ around $\boldsymbol{\tau}$ with $\mathcal{M}(U_{\boldsymbol{\tau}}) > \mathcal{M}(\boldsymbol{\tau}_0)$. The set $B = \{\boldsymbol{\tau} \in S: d(\boldsymbol{\tau}, \boldsymbol{\tau}_0) \geq \varepsilon\}$ is compact and covered by the balls $\{U_{\boldsymbol{\tau}}: \boldsymbol{\tau} \in B \}$. Let $U_{\boldsymbol{\tau}_1}, ..., U_{\boldsymbol{\tau}_p}$ be a finite sub-cover of $B$, then by the law of large numbers,
\begin{equation}
    \min_{\boldsymbol{\tau} \in B} \mathcal{M}_n(\boldsymbol{\tau}) \geq \min_{j=1, ..., p} \mathcal{M}_n(U_{\boldsymbol{\tau}_j}) \xrightarrow{a.s.} \min_{j=1, ..., p} \mathcal{M}(U_{\boldsymbol{\tau}_j}) > \mathcal{M}(\boldsymbol{\tau}_0).
\end{equation}
Therefore, we have
\begin{equation*}
    \liminf_{n} \min_{\boldsymbol{\tau} \in B} \mathcal{M}_n(\boldsymbol{\tau}) > \mathcal{M}(\boldsymbol{\tau}_0) \quad \text{almost surely},
\label{ap-eq:convergence}    
\end{equation*}
which means
\begin{equation}
\mathbb{P}\big(\liminf_{n} \min_{\boldsymbol{\tau} \in B} \mathcal{M}_n(\boldsymbol{\tau}) > \mathcal{M}(\boldsymbol{\tau}_0)\big) = 1.
\end{equation}

If $\widehat{\boldsymbol{\tau}}_n \in B$, then
$\mathcal{M}_n(\widehat{\boldsymbol{\tau}}_n) = \min_{\boldsymbol{\tau} \in B} \mathcal{M}_n(\boldsymbol{\tau})$, which by definition of $\widehat{\boldsymbol{\tau}}_n$ is no larger than $\mathcal{M}_n(\boldsymbol{\tau}_0)$. Thus, for any $n >= 1$, we have

\begin{equation*}
    \{\widehat{\boldsymbol{\tau}}_n \in B \} \subset \{\min_{\boldsymbol{\tau} \in B} \mathcal{M}_n(\boldsymbol{\tau}) \leq  \mathcal{M}_n(\boldsymbol{\tau}_0)\}.
\end{equation*}
On the other hand, we have the following inequality chain for the RHS term,
\begin{equation*}
\begin{split}
     &\limsup_n \mathbb{P}\big\{\min_{\boldsymbol{\tau} \in B} \mathcal{M}_n(\boldsymbol{\tau}) \leq  \mathcal{M}_n(\boldsymbol{\tau}_0)\big\} \\
\leq \  &\mathbb{P} \big(\limsup_n \{\min_{\boldsymbol{\tau} \in B} \mathcal{M}_n(\boldsymbol{\tau}) \leq  \mathcal{M}_n(\boldsymbol{\tau}_0)\}\big) \\
\leq \  &\mathbb{P}\big( \liminf_{n}\min_{\boldsymbol{\tau} \in B} \mathcal{M}_n(\boldsymbol{\tau}) \leq \liminf_{n} \mathcal{M}_n(\boldsymbol{\tau}_0) \big) \\
= \ & \mathbb{P}\big( \liminf_{n}\min_{\boldsymbol{\tau} \in B} \mathcal{M}_n(\boldsymbol{\tau}) \leq  \mathcal{M}(\boldsymbol{\tau}_0) \big) \quad (\text{law of large numbers})\\
= \  & 1 - \mathbb{P}\big(\liminf_{n} \min_{\boldsymbol{\tau} \in B} \mathcal{M}_n(\boldsymbol{\tau}) > \mathcal{M}(\boldsymbol{\tau}_0)\big) \\
= \ & 0.
\end{split}
\end{equation*}
Therefore, the LHS term $\mathbb{P}(d(\widehat{\boldsymbol{\tau}}_n, \boldsymbol{\tau}_0)  \geq \varepsilon) = \mathbb{P}(\widehat{\boldsymbol{\tau}}_n \in B) \rightarrow 0$, which concludes the consistency proof.

\end{proof}

\subsection{Proof of Theorem 2.4}
\begin{proof}
    To prove this theorem, we need to use the alternate definition of the Wasserstein distance. For any two distribution densities $f, g \in \mathcal{P}_2(\mathbb{R})$, the 2-Wasserstein distance $W_2(f, g)$ between them can also be defined
    as 
    \begin{equation*}
        W_2^2(f, g) = \inf_{\gamma \in \Pi(f, g)} \int_{\mathbb{R} \times \mathbb{R}} (x_1 - x_2)^2 \gamma(x_1, x_2) dx_1 dx_2,
    \end{equation*}
    where $\Pi(f, g)$ is the set of joint distributions $\gamma \in \mathcal{P}_2(\mathbb{R} \times \mathbb{R})$ such that for any $(x_1, x_2) \in \mathbb{R} \times \mathbb{R}$ the marginal distributions satisfy $\int_{\mathbb{R}} \gamma(x_1, s) ds = f(x_1)$ and $\int_{\mathbb{R}} \gamma(s, x_2) ds = g(x_2)$. More important, it can be proved the above definition is equivalent with our previous definition of the Wasserstein distance in Section 2, and the $\gamma^*$ achieving the infimum is called the optimal coupling. In our case, since $\mathbb{R}$ is a Polish space, the optimal coupling exists. More details can be found in \cite{villani2008optimal}.
    
	By the existence of optimal coupling, there are $\{ \gamma^*_k \in \Pi(f_k, g_k)\}_{k=1}^K$, such that for $k = 1, ..., K$, the lower bound of the Wasserstein distance is achieved
	\begin{equation*}
	W_2^2(f_k, g_k) = \int_{\mathbb{R} \times \mathbb{R}} (x_1 - x_2)^2 \gamma^*_k (x_1, x_2) dx_1 dx_2.
	\end{equation*}
	
	Define $\gamma^* = \sum_{k = 1}^K \pi_k \gamma^*_k$, then $\gamma^* \in \Pi(f, g)$ since for any $(x_1, x_2) \in \mathbb{R} \times \mathbb{R}$ the marginal distributions satisfy $\int_{\mathbb{R}} \gamma^*(x_1, s) ds = f(x_1)$ and $\int_{\mathbb{R}} \gamma^*(s, x_2) ds = g(x_2)$. By definition,
	\begin{equation*}
	\begin{split}
	W_2^2(f, g) 
	&\leq \int_{\mathbb{R} \times \mathbb{R}} (x_1 - x_2)^2 \gamma^* (x_1, x_2)dx_1 dx_2 \\ 
	&= \sum_{k =1}^K \pi_k \int_{\mathbb{R} \times \mathbb{R}} (x_1 - x_2)^2 \gamma^*_k (x_1, x_2) dx_1 dx_2\\ 
	&= \sum_{k=1}^K \pi_k W_2^2(g_k, f_k).
	\end{split}
	\end{equation*}
	
	Thus, it remains to prove the equality condition. Define function $\mathbf{t}_f^g(x) = G^{-1}\circ F(x)$, where $G^{-1}$ is the QF of $g$, $F$ is the CDF of $f$. The optimal coupling $\gamma^*$ can be expressed as the
	joint distribution of $\big(X, \mathbf{t}_f^g(X)\big)$ with random variable $X \sim f$ \cite{villani2008optimal}. Further, we let $g_k$ be the distribution of $\mathbf{t}_f^g(X_k)$ with random variable $X_k \sim f_k$, then $g_k$ are in the form of Equation (6) in Section 2. Moreover, the joint distribution $\gamma^*_k$ of $(X_k, \mathbf{t}_f^g(X_k))$ is the optimal coupling such that the lower bound of the Wasserstein distance is achieved
	\begin{equation*}
	W_2^2(f_k, g_k) = \int_{\mathbb{R} \times \mathbb{R}} (x_1 - x_2)^2 \gamma^*_k (x_1, x_2) dx_1 dx_2.
	\end{equation*}
	Finally, the following equations hold
	
	\begin{equation*}
	\begin{split}
	W^2_2(f, g) 
	&= \int_{\mathbb{R} \times \mathbb{R}} (x_1 - x_2)^2 \gamma^*(x_1, x_2) dx_1 dx_2 \\ 
	&= \int_{\mathbb{R}} (x - \mathbf{t}_f^g(x))^2 f(x) dx \\ 
	&= \sum_{k=1}^K \pi_k \int_{\mathbb{R}} (x - \mathbf{t}_f^g(x))^2 f_k(x) dx \\
	&= \sum_{k=1}^K \pi_k W_2^2(g_k, f_k),
	\end{split}
	\end{equation*}
	and $g = \sum_{k=1}^K \pi_k \cdot g_k$ is a valid mixture decomposition.
\end{proof}

\section{OPTIMIZATION FRAMEWORK}
We summarize our boosted MM optimization framework in the following Algorithm \ref{WaMiR} and Figure B.1.
\setcounter{figure}{0} 
\setcounter{algorithm}{0} 
\begin{algorithm}[H]
	\caption{Wasserstein Distributional Learning}
	\label{WaMiR}
	\begin{algorithmic}
		\State {\bfseries Input:} data $\mathcal{D} = \{(x_i, g_i)\}_{i=1}^n$, number of components $K$, number of iterations $M$, learning rate $\eta$.
		\State {\bfseries Initialization:} For $k = 1, ..., K$, randomly initialize constant functions
		$\alpha_k^{(0)}(x)$, $\mu_k^{(0)}(x)$, $z_k^{(0)}(x)$ as starting points. 
		\For{$m = 1$ {\bfseries to} $M$}
		\State {\bf 0: [Current values]} For $i = 1, ..., n$ and $K = 1, ..., K$, compute current predictions $\{\alpha_k^{(m-1)}(x_i)\}$, $\{\mu_k^{(m-1)}(x_i)\}$, and $\{z_k^{(m-1)}(x_i))\}$.
		\State {\bf 1: [$f_k^{(m-1)} \mapsto f_k^{(m)}$]} For $i = 1, ..., n$ and $K = 1, ..., K$, find the optimal mixture coefficients $\{\mu_{k, i}^{(m)}, \sigma_{k, i}^{(m)}\}$, and transform them into $\{\mu_{k, i}^{(m)}, z_{k, i}^{(m)}\}$. 
		
		For $k = 1, ..., K$, fit two regression trees $T_{\mu, k}^{(m)}$ and $T_{z, k}^{(m)}$ separately to the differences $\{\mu_{k, i}^{(m)} - \mu_k^{(m-1)}(x_i)\}, \{z_{k, i}^{(m)} - z_k^{(m-1)}(x_i))\}$, and update $\mu_k^{(m)}(x) = \mu_k^{(m-1)}(x) + \eta T_{\mu, k}^{(m)}(x), z_k^{(m)}(x) = z_k^{(m-1)}(x) + \eta T_{z, k}^{(m)}(x)$.
		\State {\bf 2: [$\pi_k^{(m-1)} \mapsto \pi_k^{(m)}$]} For $i = 1, ..., n$ and $K = 1, ..., K$, find the optimal mixture coefficients $\{\pi_{k, i}^{(m)}\}$, and transform them into $\{\alpha_{k, i}^{(m)}\}$.
		
		For $k = 1, ..., K$, fit a regression tree $T_{\alpha, k}^{(m)}$ to the differences $\{\alpha_{k, i}^{(m)} - \alpha_k^{(m-1)}(x_i)\}$, and update $\alpha_k^{(m)}(x) = \alpha_k^{(m-1)}(x) + \eta T_{\alpha, k}^{(m)}(x)$.
		
		\State {\bf 3: [$g_k^{(m-1)} \mapsto g_k^{(m)}$]} For $i = 1, ..., n$, given $\alpha_k^{(m)}(x)$, $\mu_k^{(m)}(x)$, $z_k^{(m)}(x)$, find the optimal mixture decompositions $\{g_{k, i}^{(m)}\}$ of each output distribution $g_i$.
		\EndFor
		
		\State {\bfseries Transformation:} Transform $\{\alpha_k^{(M)}(x), \mu_k^{(M)}(x), z_k^{(M)}(x)\}_{k=1}^K$ back into $\{\pi_k^{(M)}(x)$, $\mu_k^{(M)}(x)$, $\sigma_k^{(M)}(x)\}_{k=1}^K$
		\State {\bfseries Output:} Boosted tree predictions $\widehat{\boldsymbol{\tau}}(x) = \{\pi_k^{(M)}(x), \mu_k^{(M)}(x), \sigma_k^{(M)}(x)\}_{k=1}^K$
	\end{algorithmic}
\end{algorithm}

\begin{figure}[H]
\centering
\begin{tikzpicture}[node distance=1.8cm,
    every node/.style={fill=white, font=\sffamily}, align=center]
  \node (start) [startstop] {Input: $\mathcal{D}$, $K, M, \eta$};
  \node (init) [base, below of=start] {Initialization: $\widehat{\boldsymbol{\tau}}^{(0)}(x)$};
  \node (current) [base, below of=init] {Compute current predictions: $\{\alpha_k^{(m-1)}, \mu_k^{(m-1)}, z_k^{(m-1)} \}_{k=1}^K$};
  \node (param) [base, below of=current] {Find optimal parameters: $\{\alpha_k^{(m)}, \mu_k^{(m)}, z_k^{(m)} \}_{k=1}^K$}; 
  \node (tree) [base, below of=param] {Fit regression trees $T_k$'s to the parameter differences};
  \node (model) [base, below of=tree] {Update model via Boosting: $\widehat{\boldsymbol{\tau}}^{(m)}(x) = \widehat{\boldsymbol{\tau}}^{(m-1)}(x) + \eta \{T_k(x)\}_{k=1}^K$};
  \node (converge) [decision, below of=model, aspect=2.5] {Converge ?};
  \node (end) [startstop, below of=converge, yshift=-1cm] {Output: $\widehat{\boldsymbol{\tau}}^{(M)}(x)$};    

  \draw[->]             (start) -- (init);
  \draw[->]             (init) -- (current);
  \draw[->]             (current) -- (param);
  \draw[->]             (param) -- (tree);
  \draw[->]             (tree) -- (model);
  \draw[->]             (model) -- (converge);
  \draw[->]             (converge) -- node {Yes}
                                   (end);
  \draw[->] (tree.west) -- ++(-1.6,0) -- ++(0, 0.8) -- ++(0, 1) --                
     node[xshift=-1cm, yshift=-0.9cm, text width=2.5cm]
     {Respectively for $\{\alpha, \mu, z\}$}(param.west);
  \draw[->] (converge.east) -- ++(4.8,0) -- ++(0, 3.2) -- ++(0, 4) --                
     node[xshift=1cm, yshift=-3cm, text width=2.5cm]
     {No}(current.east);
\end{tikzpicture}
\caption{Diagram for Fitting SCGMM}
\label{fig:algo-diagram}
\end{figure}
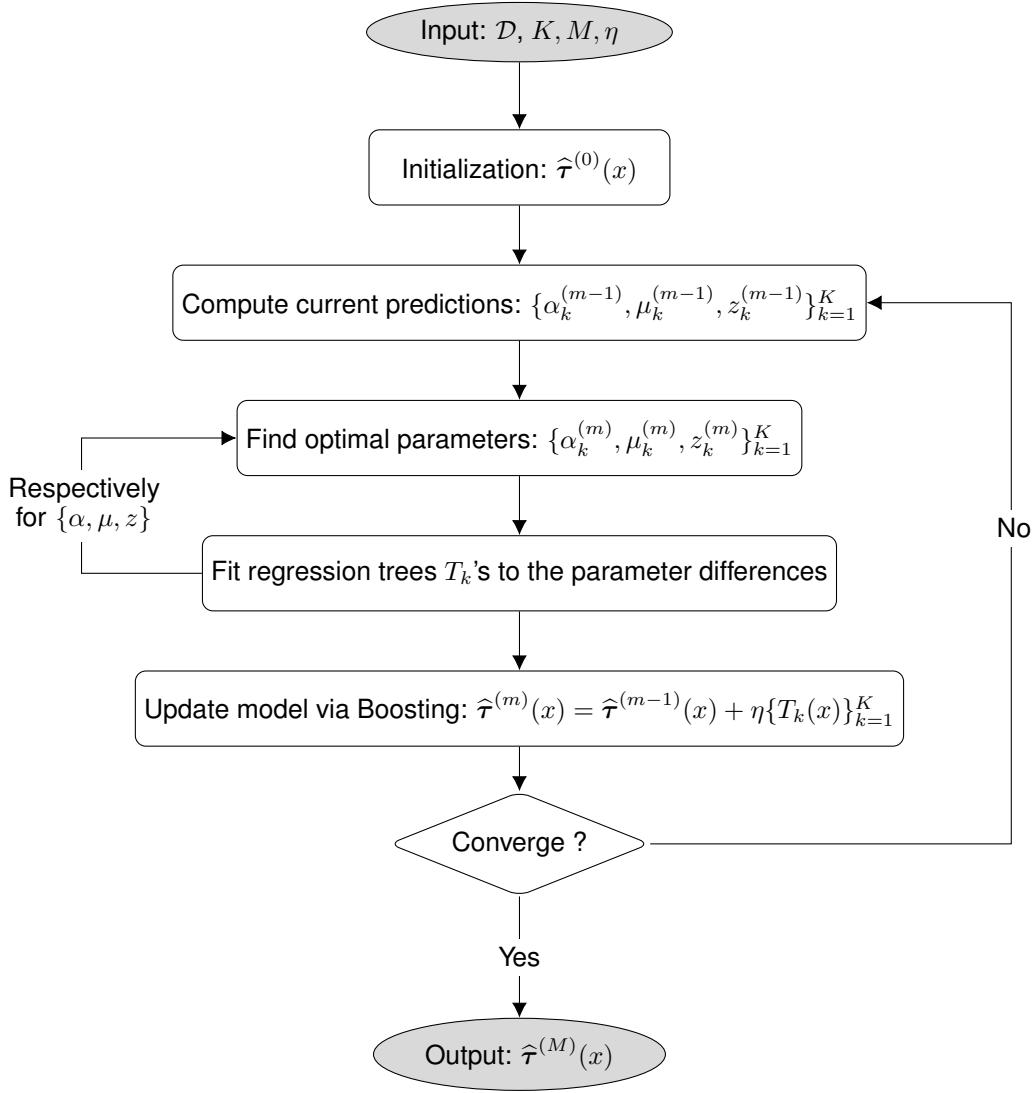

\section{SIMULATION DETAILS}
\setcounter{figure}{0} 
\setcounter{algorithm}{0} 
\setcounter{table}{0} 
In this part, we introduce the simulation details for reproducibility. All the simulations were implemented with \texttt{Python} version 3.6 and \texttt{R} version 4.0.3. 
 
\subsection{Simulation Setup}
The simulation mechanism is illustrated in Equation (7) of Section 3. In the experiment, with predefined parameters $(N, n, \omega)$, density-on-scalar data $\mathcal{D} = \{(x_i, \widehat{g}_i) \}_{i=1}^N$ were simulated as follows

\begin{algorithm}
	\caption{Data Simulation}
\begin{algorithmic}
	\State {\bfseries Input:} $N$-number of samples , $n$-number of data points in each density, $\omega$-noise level.
	\For{$i = 1$ {\bfseries to} $N$}
	\State {\bf [Inputs]} Randomly sample covariate vectors $x_i \sim U[-1, 1]$ and random noises $\varepsilon_i \sim N(0, \omega^2)$.
	\State {\bf [Outputs]} Randomly sample i.i.d. $(y_{i, 1}, ..., y_{i, n})$ from the conditional density $p(Y|X=x_i, \varepsilon_i)$, and construct empirical $\widehat{g}_i$.
	\EndFor
	\State {\bfseries Output:} Density-on-scalar data $\mathcal{D} = \{(x_i, \widehat{g}_i) \}_{i=1}^N$.
\end{algorithmic}
\end{algorithm}

\subsection{Fr{\'e}chet Regression}
The implementation of global Fr{\'e}chet regression was following the algorithm introduced in the reference paper \cite{petersen2019frechet, petersen2021wasserstein}. All the simulations were coded in \texttt{R} using the package \texttt{frechet} \footnote{\url{https://cran.r-project.org/web/packages/frechet/index.html}} developed by the author. In model training, we first calculated the empirical quantile function $\widehat{F}^{-1}_{g_i}$ from the random points $(y_{i, 1}, ..., y_{i, n})$, and then fed them to the function \texttt{GloDenReg}. There is no tuning parameter in global Fr{\'e}chet regression.

\subsection{CLR Regression}
The implementation of the B-spline smoothed density regression with centered log-ratio transformation was based on the sample codes \footnote{\url{https://www.sciencedirect.com/science/article/abs/pii/S0167947318300276}} from the reference paper \cite{talska2018compositional}. Specifically, for a PDF $f \in \mathcal{B}^2(I)$, the centered log-ratio transformation is defined as
\begin{equation*}
    CLR[f](t) = \log f(t) - \frac{1}{\gamma}\int_I \log f(s) ds, \quad \forall t \in I,
\end{equation*}
where $\gamma$ is the normalization constant such that $\int_I CLR[f](t) dt = 0$. Actually, the centered log-ratio transformation defines an isometric isomorphism between the Bayes space $\mathcal{B}^2(I)$ and the Hilbert space $L^2(I)$.  

The hyper-parameters of CLR regression were chosen following the reference paper \cite{talska2018compositional}. With randomly sampled data points $(y_{i, 1}, ..., y_{i, n})$, we first built the histogram of each functional output, of which the optimal number of classes were decided by the Sturges' rule \cite{sturges1926choice}. Because of the output heterogeneity, possible count zeros were replaced by positive posterior expectations with Perks prior \cite{martin2015bayesian}. Afterwards, centered log-ratio transformation was applied to map the density estimations (histograms) into the Bayes space $\mathcal{B}^2(\mathbb{R})$, and B-spline polynomials with equally spaced knots were utilized to smooth the log density curve. To calculate the Wasserstein loss, we transformed the estimated density function $\widehat{f}_i$ back into a quantile function $\widehat{F}^{-1}_i$ at the given 99 equally spaced quantile levels $\{ \tau_l = \frac{l}{100}\}_{l=1}^{99}$ using linear interpolation, and then numerically calculated the Wasserstein loss.

In the sample codes from the reference paper, quadratic B-splines with five equally spaced knots were used. In our implementation, we fine-tuned these hyper-parameters (degree $\in \{2, 3, 4\}$, number of knots $\in \{5, 8, 10\}$) using cross-validation.

\subsection{Mixture Density Network}
\begin{figure}[t]
	\begin{center}
		\includegraphics[width=\columnwidth]{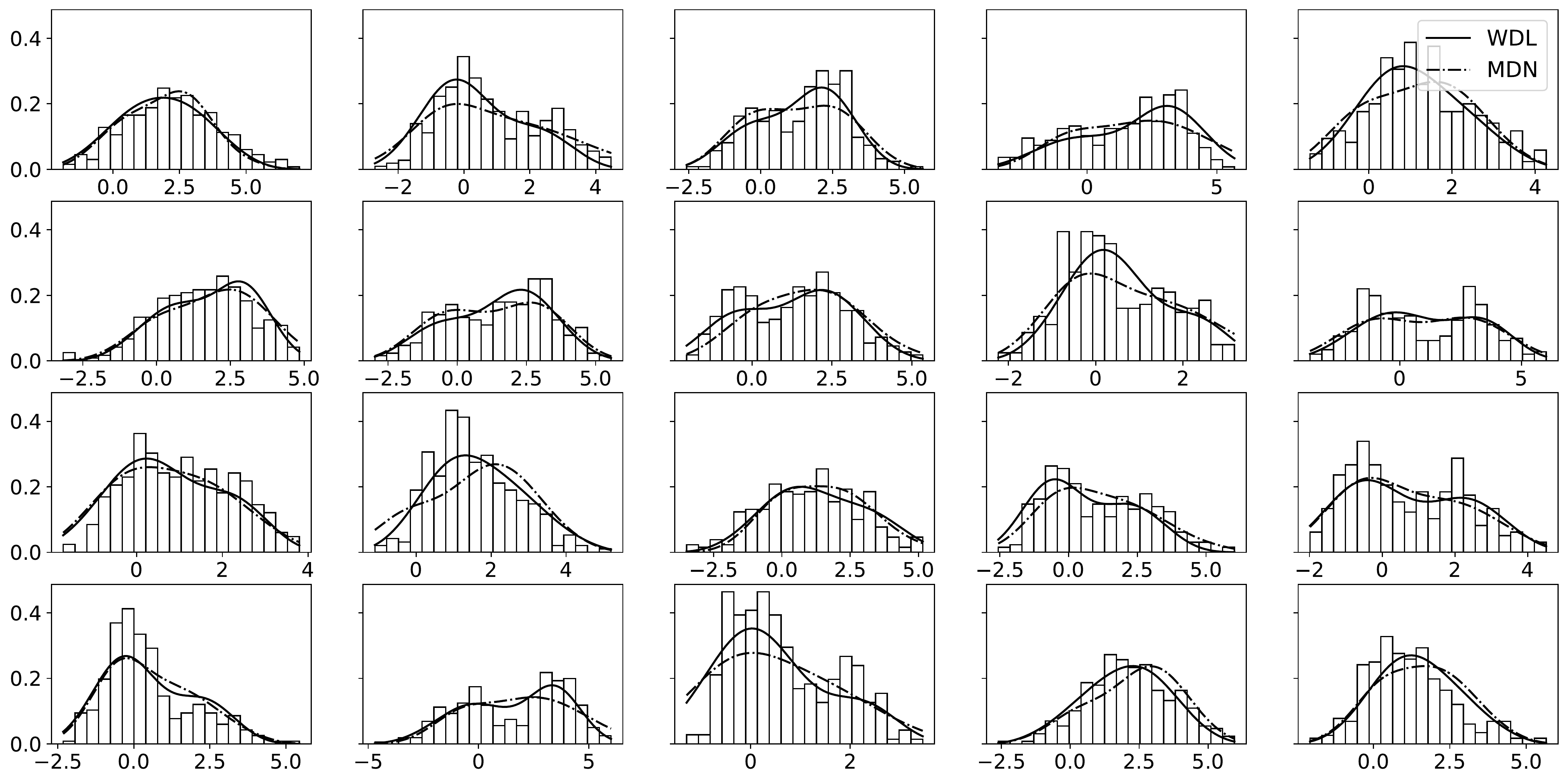}
		\caption{Selected visualizations of density predictions.}
		\label{fig: appendix_density}
	\end{center}
	\vskip -0.2in
\end{figure}

\begin{figure}[ht]
	\begin{center}
		\includegraphics[width=0.7\columnwidth]{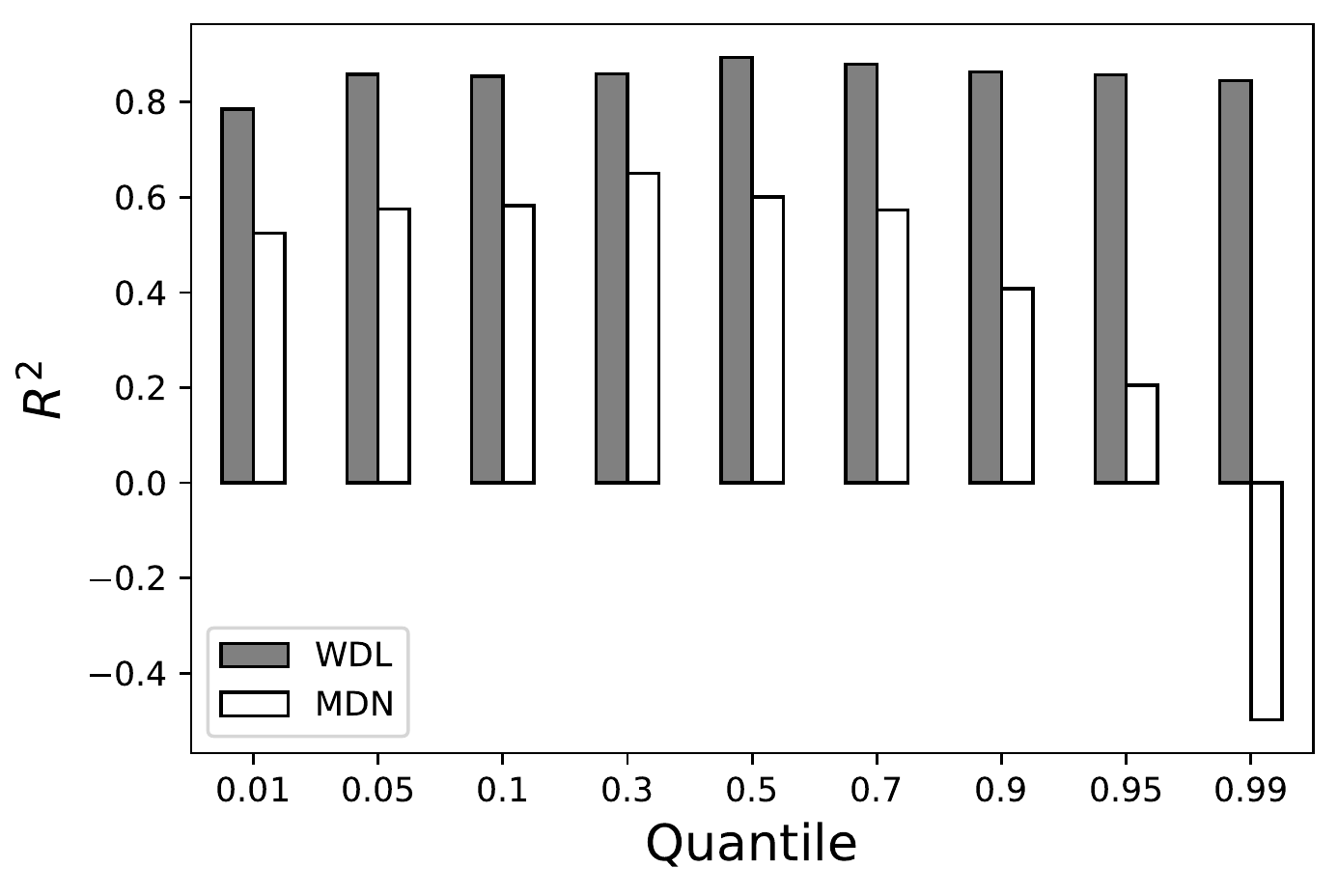}
		\caption{Test R-squared across different quantiles.}
		\label{fig: appendix_RS}
	\end{center}
	\vskip -0.2in
\end{figure}
In this part, we compare the predictions of WDL and the likelihood-based conditional density estimation framework Mixture Density Network (MDN). Instead of modeling the output as density functions, the conditional estimation framework focuses on the estimation of the conditional density $\hat{p}_\theta(y|x)$ from scalar data $\{(x_i, y_i)\}_{i=1}^N$ by maximizing the likelihood function
\begin{equation*}
    \mathcal{L} = \Pi_{i=1}^N p(x_i, y_i) = \Pi_{i=1}^N p(x_i) p_\theta(y_i | x_i).
\end{equation*}

In our simulations, when applying the MDN model, instead of feeding the data $(x_i, g_i)$ with density output $g_i$, we fed the raw data points $(x_i, \{y_{i,j}\}_{j=1}^{n_i})$ into the model, and then estimated the model parameters by maximum likelihood estimation. Quantitative performance comparisons can be found in Table 1. 

Here, we choose the noise level $\omega = 0.2$, under which WDL and MDN have the largest performance difference. We visualize randomly selected predictions of the two methods (curves) as well as the ground truth (histogram) in Figure \ref{fig: appendix_density}. As shown in the figure, the WDL method better captures the general shape of the output density distributions, especially at extreme quantile levels. Also, in Figure \ref{fig: appendix_RS}, we visualize the test R-squared of conditional quantile estimation for these two methods at different quantile levels. Utilizing the Wasserstein distance, the WDL method achieves a more consistent prediction performance across different quantile levels ($R^2 \approx 0.8$). In comparison, the MDN method has a relatively close performance around the median ($\rho = 0.5$), but has poor performance when prediction extreme conditional quantiles ($\rho$ close to 0 and 1).

\subsection{Regression with Sparse Quantiles}
In real-world applications, a common scenario is that the conditional quantiles $F_{g_i}^{-1}(\tau)$ of the functional outputs $\{g_i\}_{i=1}^N$ are only available at a series of sparse quantile values, for instance, $\tau \in  \{0, 0.1, ..., 0.9, 1\}$ in the UK biobank data \footnote{https://biobank.ndph.ox.ac.uk/showcase/field.cgi?id=23000}. To apply the Wasserstein distributional learning framework to these scenarios, additional treatments are essential due to the definition of the Wasserstein loss. To be more specific, as introduced in Section 2, the Wasserstein distance between the two density functions is the integral of their quantile differences. When the dense quantiles are available, e.g. $\tau \in  \{0.01, 0.02, ..., 0.99\}$, the Wasserstein distance can be numerically approximated by the average of all quantile differences, as show below.
\begin{equation*}
    W_2^2(g_1, g_2) = \int_{0}^{1} \big(F_{g_1}^{-1}(s) -  F_{g_2}^{-1}(s)\big)^2 ds \approx \frac{1}{100}\sum_{i=1}^{99} \big(F_{g_1}^{-1}(\frac{i}{100}) -  F_{g_2}^{-1}(\frac{i}{100})\big)^2.
\end{equation*}
While, such approximation is far from accurate when the quantiles are sparse, e.g. $\tau \in  \{0.1, 0.2, ..., 0.9\}$. Similarly, to measure the discrepancy between functional outputs, we can still define the quasi-Wasserstein loss as 
\begin{equation*}
    \widetilde{W}_2^2(g_1, g_2) = \frac{1}{10}\sum_{i=1}^{9} \big(F_{g_1}^{-1}(\frac{i}{10}) -  F_{g_2}^{-1}(\frac{i}{10})\big)^2.
\end{equation*}
However, to the best of our knowledge, there is no efficient algorithm for solving this optimization problem when the model family is Semi-parametric Conditional Gaussian Mixture Models (SCGMM) and the quantiles levels are sparse.

\begin{table}[t]
\caption{Performance comparison in terms of Wasserstein loss and R-squared (bracket) when the quantiles are sparse.}
\begin{center}
\begin{tabular}{rrrrr}
Method & $\varepsilon = 0.1$ & $\varepsilon = 0.3$ & $\varepsilon=0.5$ \\\hline
WDL & \textbf{0.072 (0.84)} & \textbf{0.132 (0.64)} & \textbf{0.216 (0.36)}\\
Fr{\'e}chet & 0.226 (0.51)& 0.247 (0.35)& 0.292 (0.16)
\end{tabular}
\end{center}
\end{table}
Practically, a solution to address this issue is to apply a linear interpolation. Over the training set, we can augment the sparse quantiles into a dense array using linear interpolation. It should be noted that the augmented quantiles naturally satisfy the non-crossing constraints since linear interpolation keeps the monotonicity of the quantile function. Then, the WDL framework can be fitted over the training set using the augmented dense quantiles. Optimization convergence is theoretically guaranteed in Section 2. Last, but not least, we can take the predicted sparse quantiles as the functional output, and the test error can be evaluated over the test set at the given quantile levels.

Using the same simulation setup as Section 3, we rerun the experiments with sparse quantiles $\tau \in  \{0, 0.1, ..., 0.9, 1\}$, and compare the performance of WDL with Fr{\'e}chet regression. CLR regression and MDN are ignored here because under the sparse quantiles, the estimated density would be highly unstable. As shown in Table C.1, the WDL performance is similar with that in the case of dense quantiles (Table 1), and is significantly better than Fr{\'e}chet regression.

\subsection{Simulations in The Linear Case}
In this part, we present simulation results in a simpler setup, and prove our proposed method is flexible under different settings. The simulation setup of this section follows the experiment in \cite{petersen2019frechet}. In this case, the quantile values of the functional output are linear functions of the input variables, which is the underlying assumption of global Fr{\'e}chet regression. Results show our proposed Wasserstein distributional learning framework is able to achieve comparable performance even when the data are simulated in a different way from the model assumption.

To simulate the functional regression data, the responses $Y$ are distributions represented by quantile functions $Q(Y)$ and the predictors are random vectors $X \in \mathbb{R}^3$. For any given quantile level $0 < \tau < 1$, the regression function is 
\begin{equation*}
    Q^{-1}_Y(\tau) = (\mu_0 + \beta^\top x) + (\sigma_0 + \gamma^\top x) \cdot \Phi^{-1}(\tau),
\end{equation*}
where $\Phi$ is the standard normal distribution function. $\mu_0, \sigma_0 \in \mathbb{R}$, $\beta, \gamma \in \mathbb{R}^3$, satisfy $\sigma_0 + \gamma^\top x > 0$ for all $x$. In fact, this simulation scenario corresponds to cases in which the response functions are normal distributions with linear parameters on average.

In the simulation, the functional response $Y$ is generated conditional on $X$ by adding noise to the quantile functions. For each input $X = x$, the distribution parameters $(\mu, \sigma)$ are independently sampled from $p(\mu | X = x) = \mbox{N}(\mu_0 + \beta^\top x, v_1)$ and $p(\sigma | X = x) = Gam((\sigma_0 + \gamma^\top x)^2 / v_2, v_2 / (\sigma_0 + \gamma^\top x))$, and the corresponding functional output is $Y = \mu + \sigma \Phi^{-1}$. Specifically, we set the parameters as $\mu_0 = 0, \sigma_0 = 3, v_1 = 0.25, v_2 = 1, \beta = (1, -1, 3)^\top, \gamma = (0.1, 0.2, 0.3)$, and simulated $N = 200$ random distributions, which are each represented by $n = 300$ random data points.

\begin{figure}[ht]
	\begin{center}
		\includegraphics[width=\columnwidth]{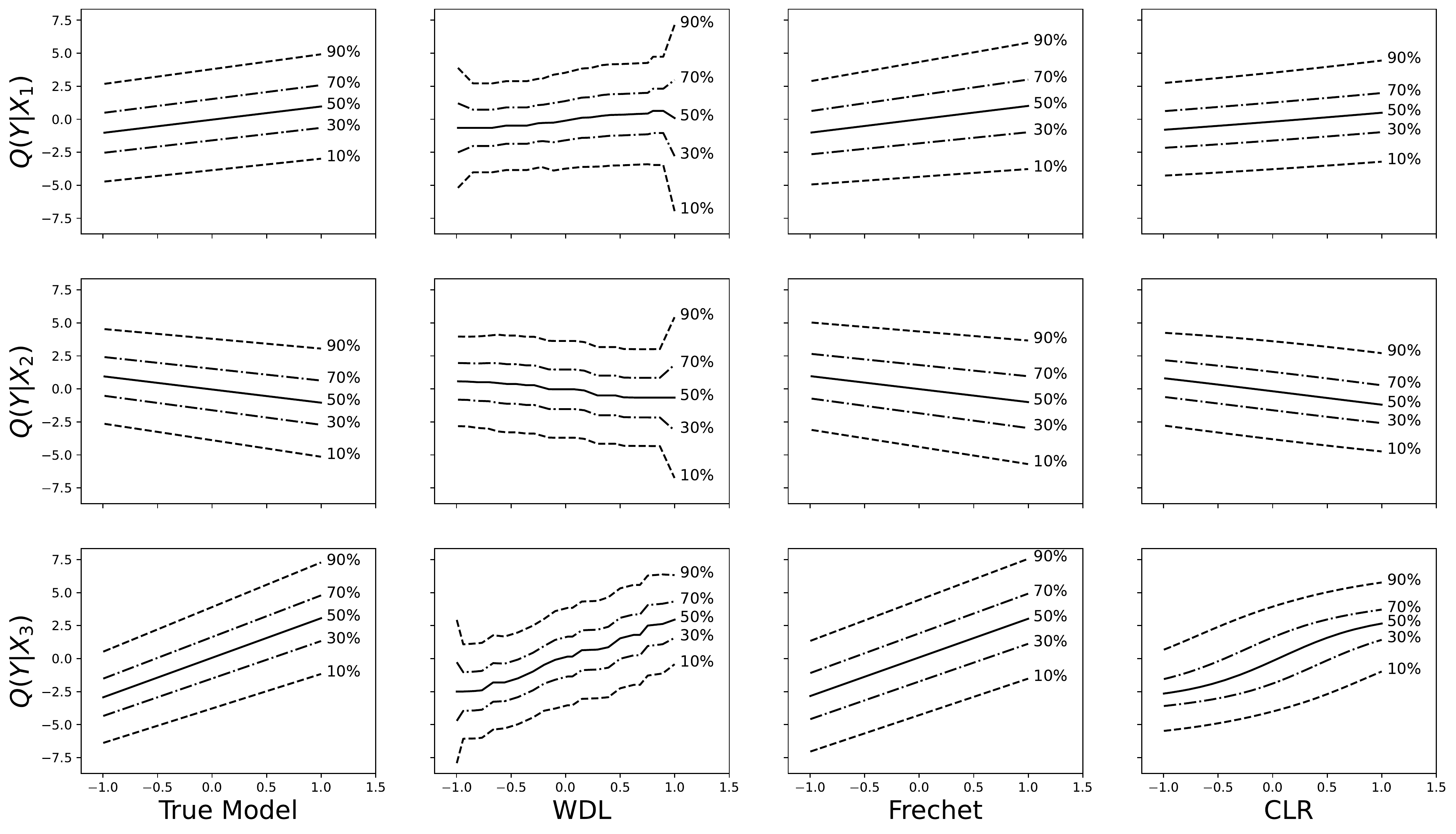}
		\caption{Functional partial dependence plot for the three methods.}
		\label{fig:partial-dependence-appendix}
	\end{center}
	\vskip -0.2in
\end{figure}

The functional partial dependence plots on test sets are observed in Figure \ref{fig:partial-dependence-appendix}. Results show that the proposed Wasserstein distributional learning has stable performance under different simulation settings.

\section{REAL-WORLD APPLICATIONS}
In this appendix section, we provide more details for the real-world applications.

\subsection{Climate Modeling}
\setcounter{figure}{0}   

\begin{figure*}[ht]
	\begin{center}
		\includegraphics[width=0.7\textwidth]{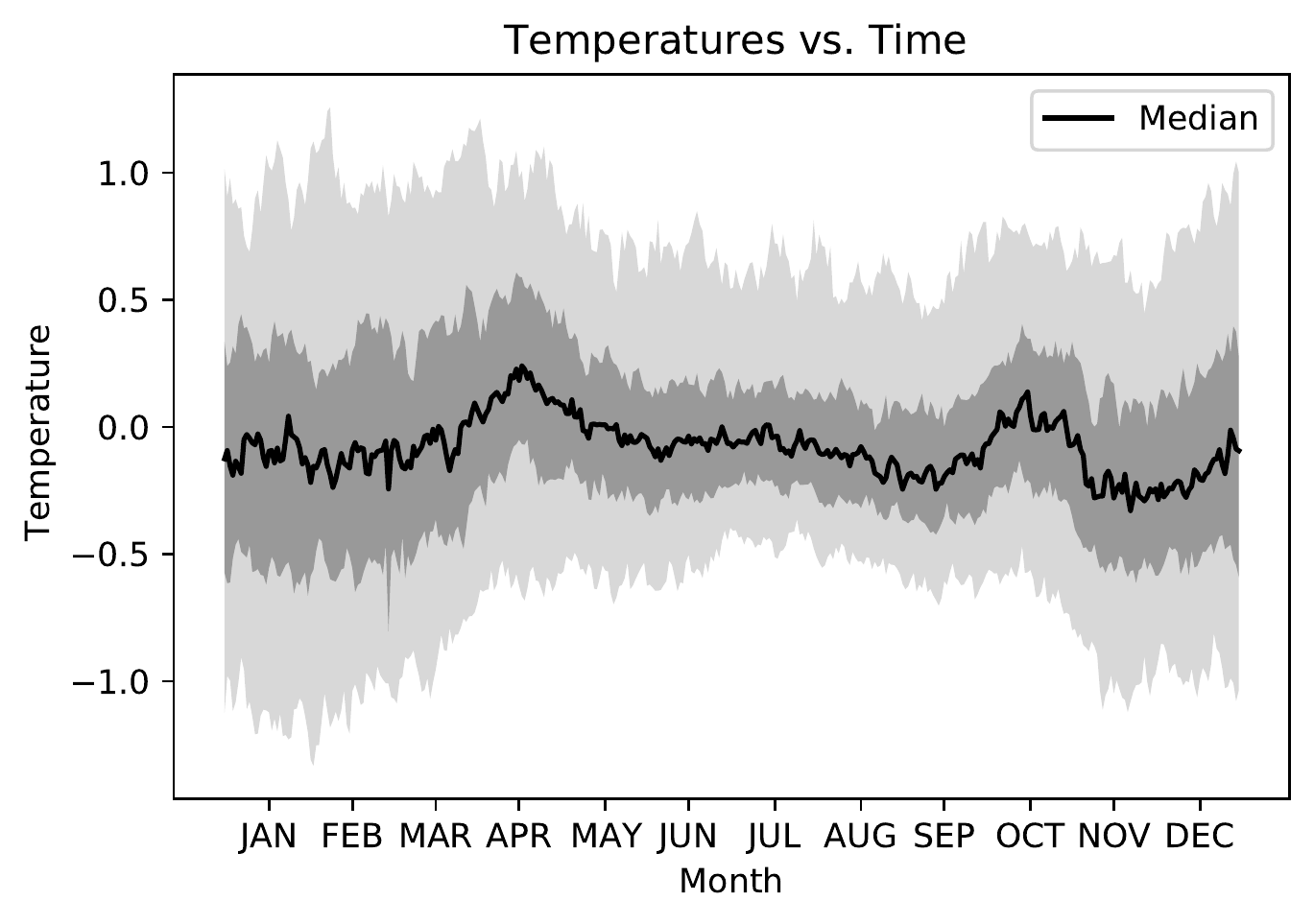}
		\caption{Daily trend of global average temperatures. The solid curve represents the median temperature of each day. The dark gray band represents the 30\% and 70\% quantiles. The light gray band represents the 10\% and 90\% quantiles.}
		\label{fig:appendix_daily_plot}
	\end{center}
\end{figure*}

In Figure \ref{fig:appendix_daily_plot}, we visualize the daily trend of global average temperatures. Using data from 1880 to 2012, we calculate the temperature quantiles (10\%, 30\%, 50\%, 70\%, 90\%) for each day. An interesting finding is that spring and autumn have a higher temperature in general than the other two seasons. Also, the temperature variability in summer is smaller than the other seasons. A potential explanation would be that the temperatures were calculated by averaging the records from multiple weather stations both from the north and south hemisphere. As a result, the averaged temperatures would display a more complicated trend since the north and south hemispheres always have different seasons. 

\begin{figure*}[t]
	\begin{center}
		\includegraphics[width=\textwidth]{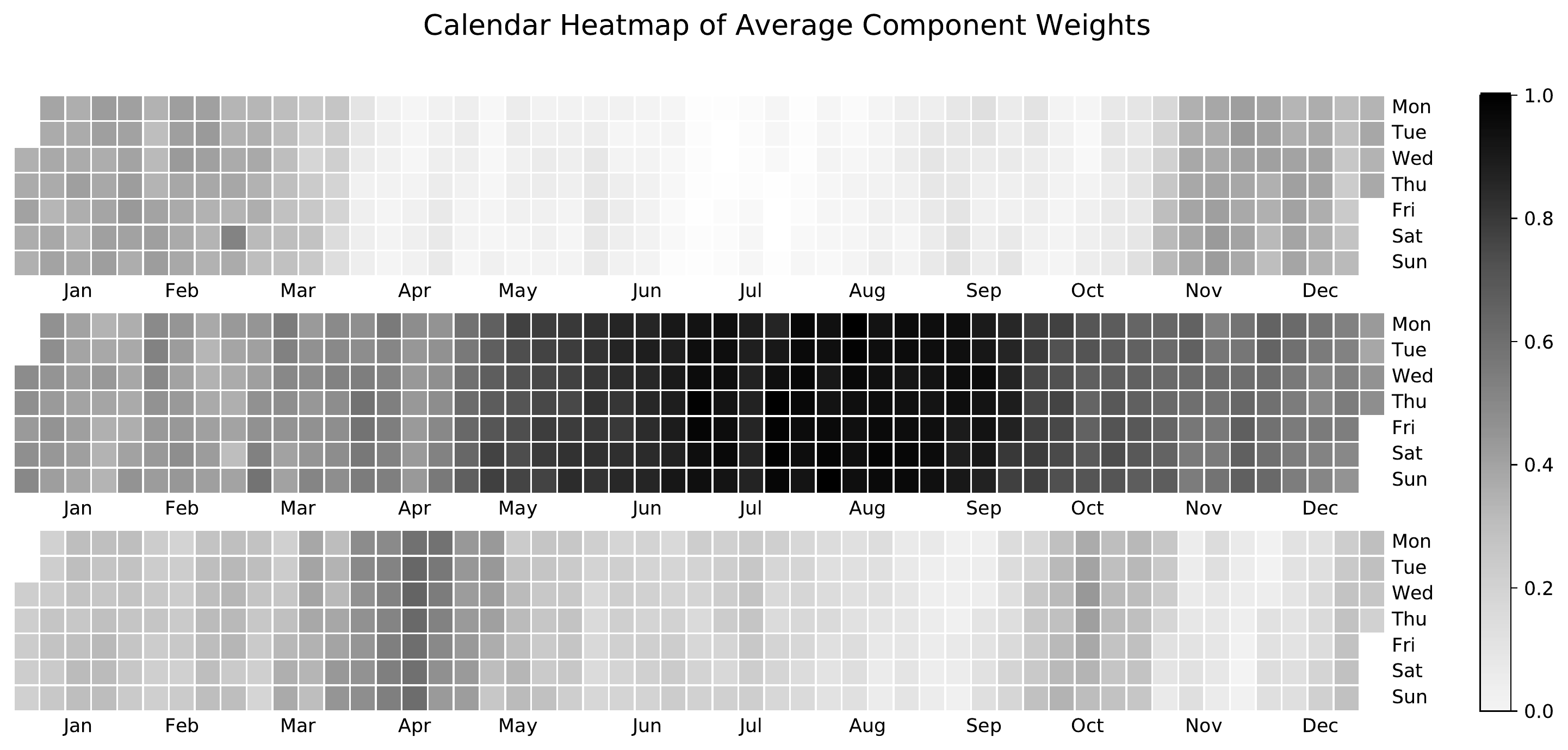}
		\caption{Average Component Weights of Each Day. First row: Component I; Second row: Component II; Third row: Component III. Component weights are represented using different colors.}
		\label{fig:appendix_component_weight}
	\end{center}
\end{figure*}

We also visualize the average component weights of each day in Figure \ref{fig:appendix_component_weight}. Specifically, for each day from 1880 and 2012, we calculate the weight of each component using the predicted WDL model, and then average them across years. Finally, we plot them using a calendar heatmap with each grid representing a day in the year 2020 (we chose 2020 because it is a leap year with 366 days).

\begin{figure}[ht]
	\begin{center}
		\includegraphics[width=\textwidth]{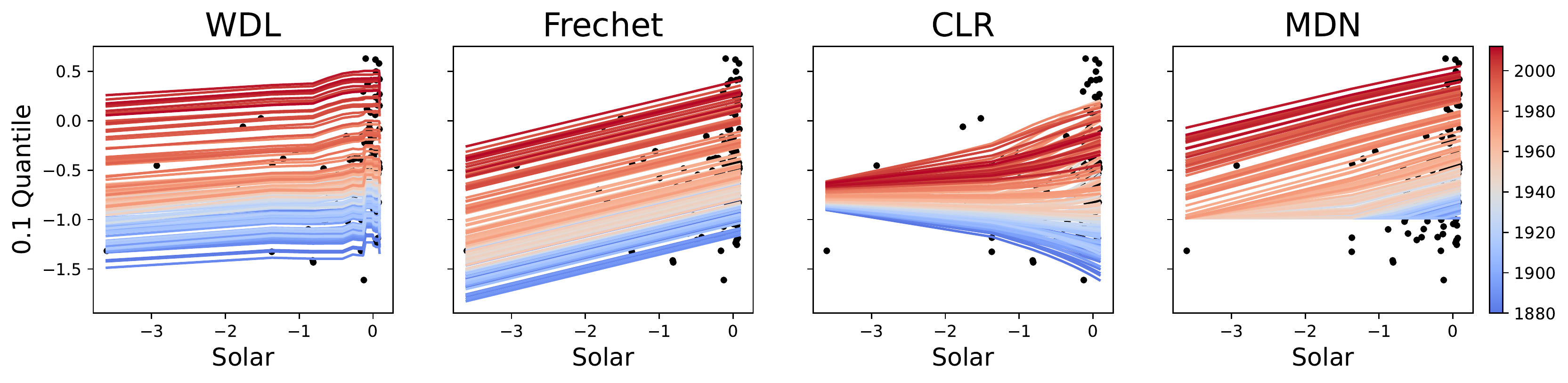}
		\caption{ICE plots of conditional temperature quantiles (10\%) by solar irradiance. True conditional quantiles vs. solar irradiance from raw data are represented as black dots.}
		\label{fig: appendix_ICE}
	\end{center}
\end{figure}

In Figure \ref{fig: appendix_ICE}, we make the Individual Conditional Expectation (ICE) plots for each method. From the figure, WDL and Fr{\'e}chet regression are the only two methods that can give unbiased estimations of conditional quantiles due to their choices of Wasserstein loss. Compared with Fr{\'e}chet regression, WDL performs better when there exists nonlinearity in the conditional dependence. These findings explain the phenomenon that WDL is the only method that can predict the ``cold temperature plateau'' between 1940 and 1960 in Figure 5. In Figure \ref{fig: appendix_temp_1} to Figure \ref{fig: appendix_temp_3}, we visualize the predicted annual temperature distributions for each method from 1880 to 2012. Also in those figures, WDL captures the tail behavior more accurately than the others.

\begin{figure}[h]
	\begin{center}
		\includegraphics[width=\textwidth]{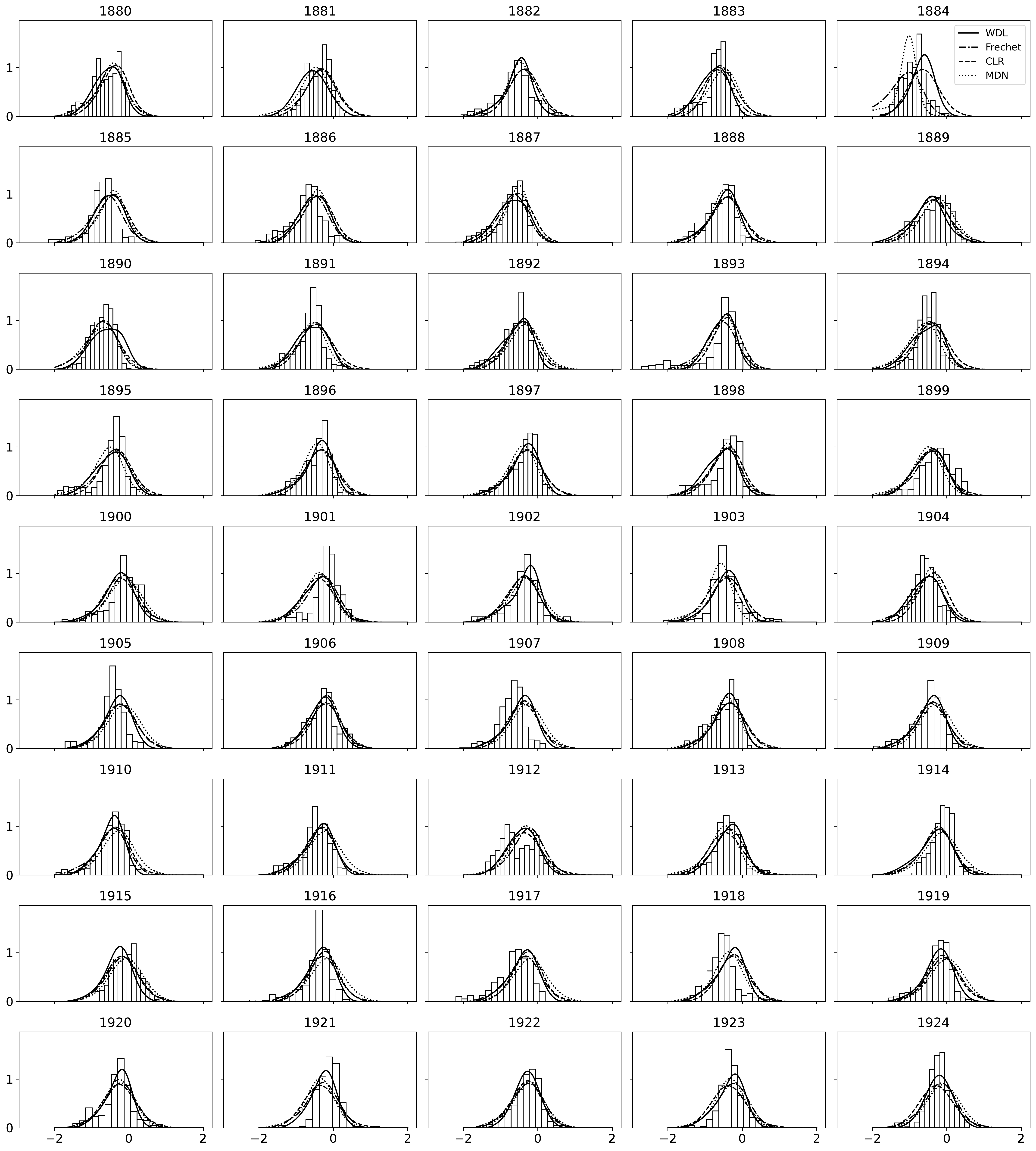}
		\caption{Predictions of annual temperature distributions (Part I).}
		\label{fig: appendix_temp_1}
	\end{center}
\end{figure}

\begin{figure}[h]
	\begin{center}
		\includegraphics[width=\textwidth]{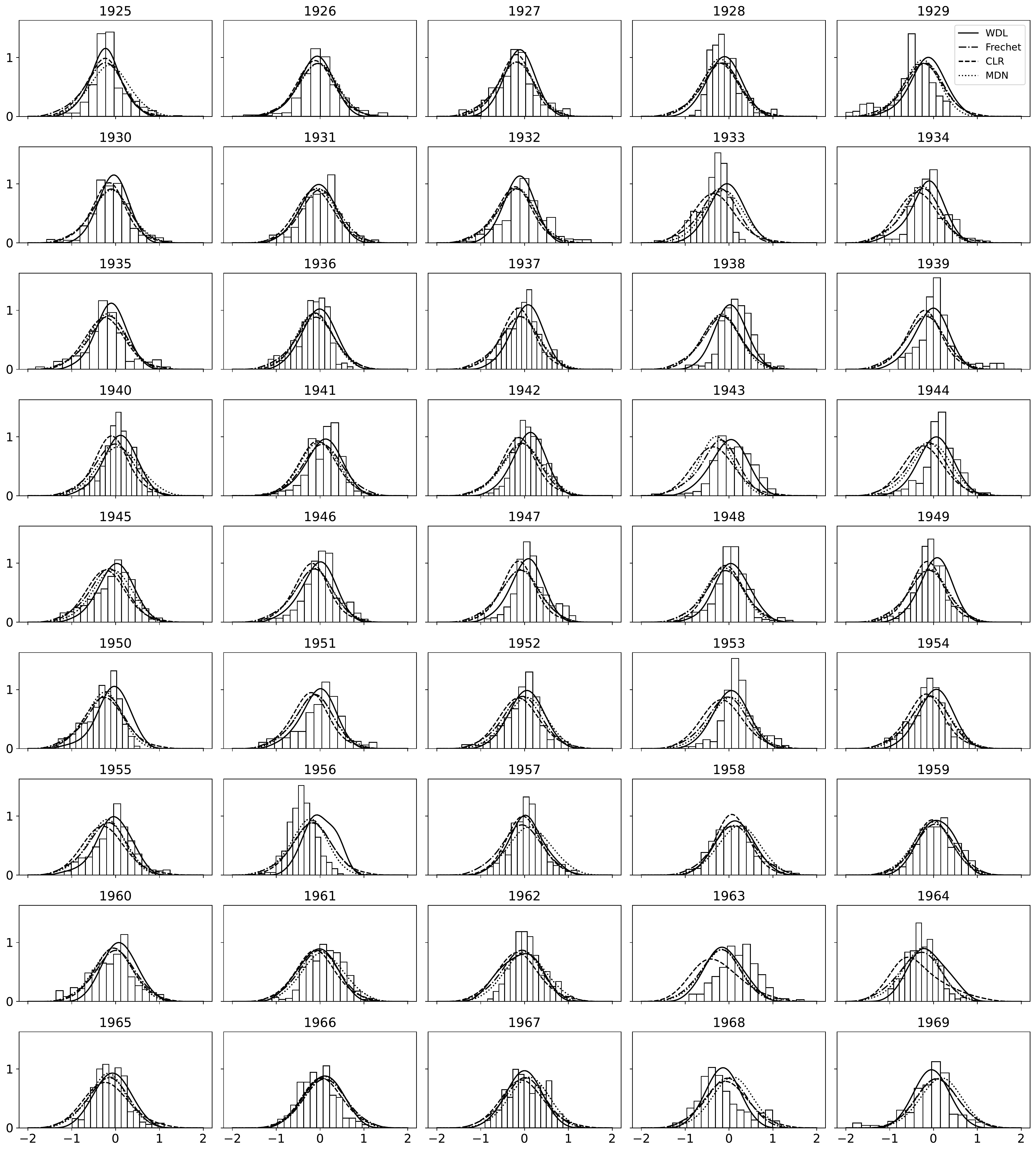}
		\caption{Predictions of annual temperature distributions (Part II).}
		\label{fig: appendix_temp_2}
	\end{center}
\end{figure}

\begin{figure}[h]
	\begin{center}
		\includegraphics[width=\textwidth]{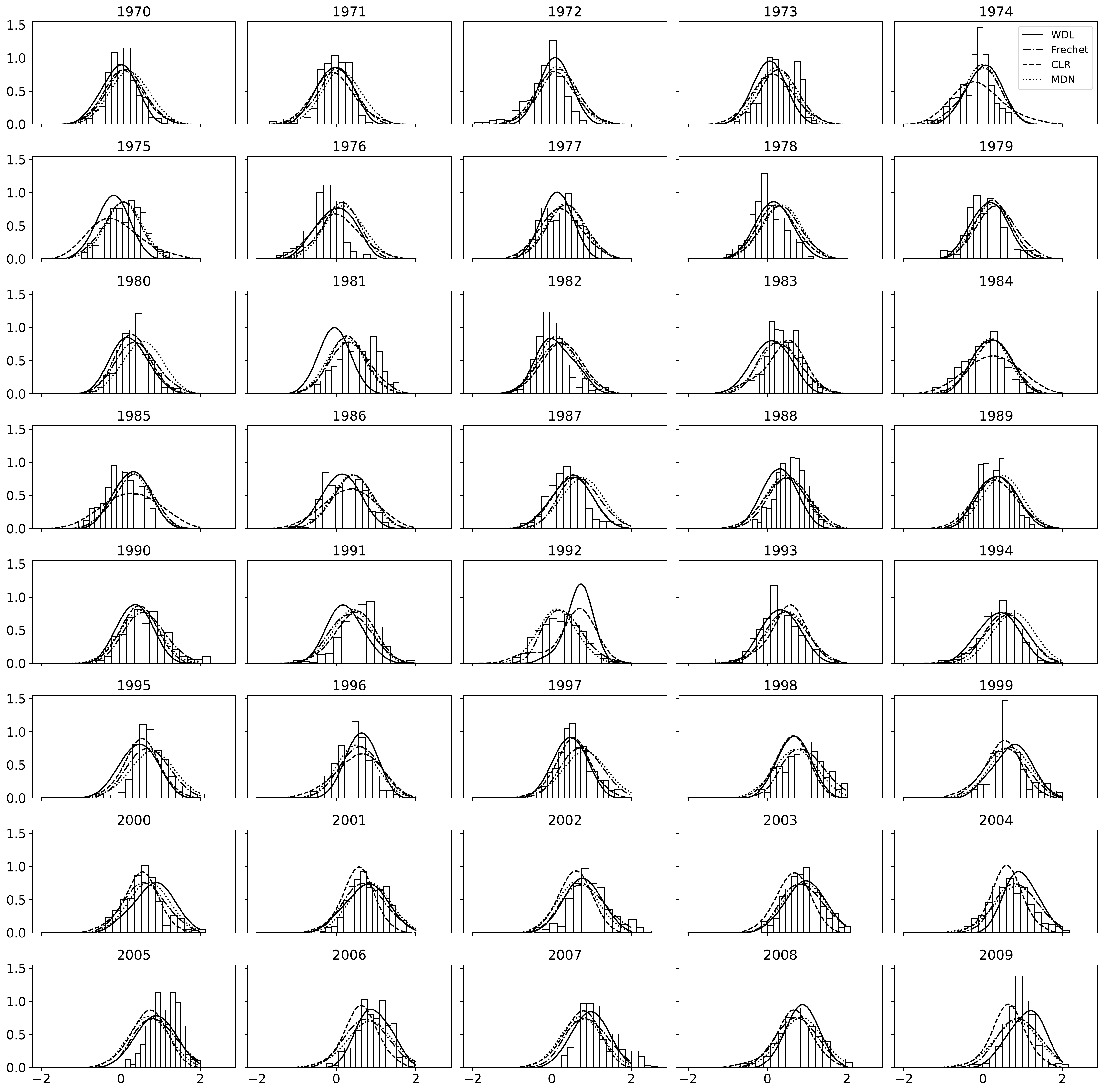}
		\caption{Predictions of annual temperature distributions (Part III).}
		\label{fig: appendix_temp_3}
	\end{center}
\end{figure}

\subsection{Income Modeling}
In this part, we make the functional partial dependence plot for predicted conditional quantiles versus the input scalar covariates in Figure \ref{fig: ap_income_PDP}. The abbreviation information are as follows, EDU: Education, ENV: Environment, PPL: Population, GDP: GDP Per Capita, CRM: Crime Rate, DBT: Diabetes, EMP: Unemployment. Since there is no ground truth in the real dataset, this figure only explains how each method interprets the functional dependence between the density output and scalar covariates in different ways.

\begin{figure*}[h]
	\begin{center}
		\includegraphics[width=\textwidth]{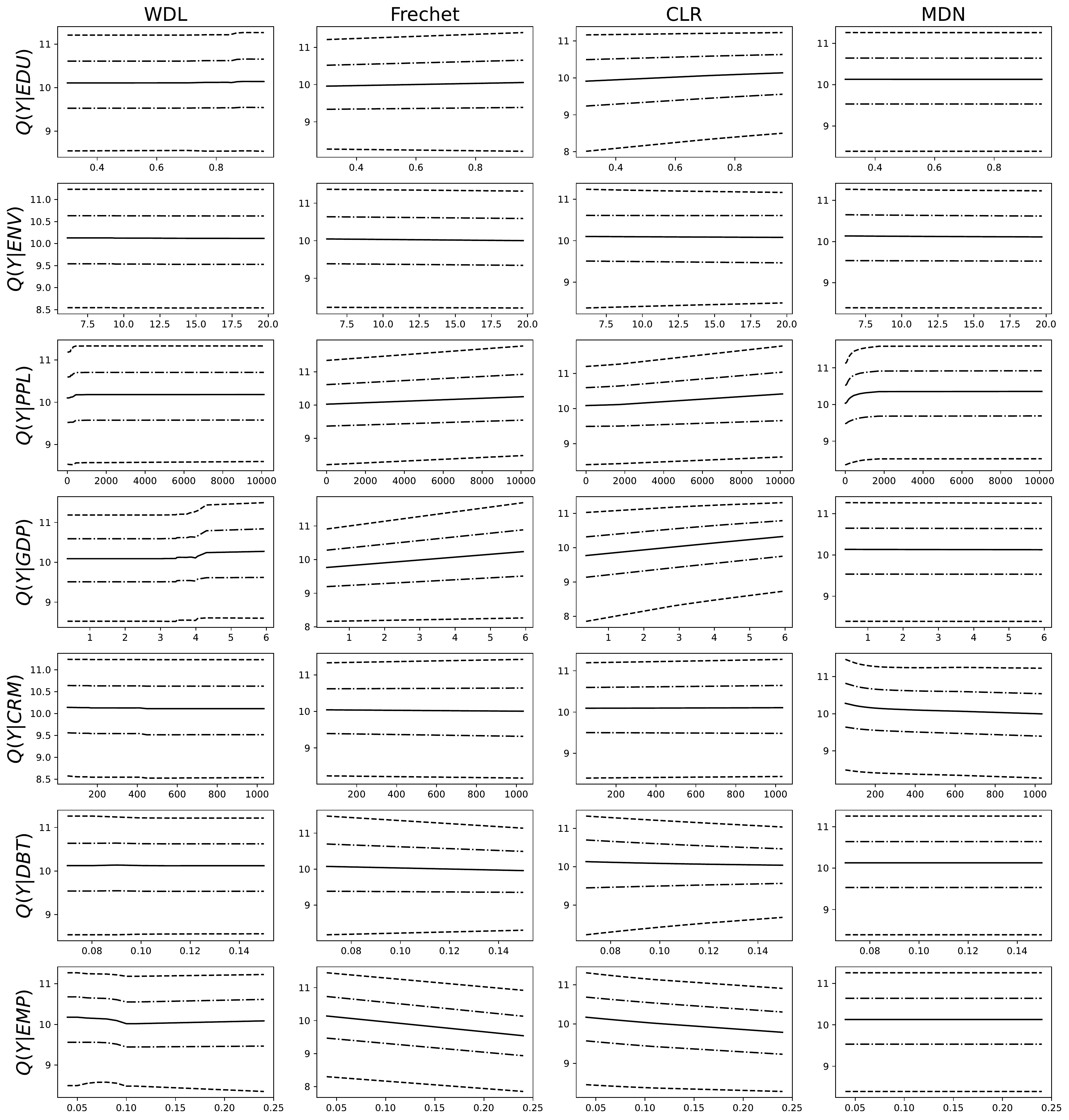}
		\caption{Functional partial dependence plot for predicted conditional quantiles versus the input scalar variables.}
		\label{fig: ap_income_PDP}
	\end{center}
\end{figure*}

\clearpage

\bibliographystyle{plain}
\bibliography{refs}

\end{document}